\documentclass[
  aps,pra,reprint,
  amsmath,amssymb,
  floatfix,superscriptaddress,longbibliography
]{revtex4-2}

\usepackage{comment}
\usepackage[T1]{fontenc}
\usepackage[utf8]{inputenc}
\usepackage{amsthm,mathtools}
\usepackage{bm}

\usepackage{physics}
\usepackage{graphicx}
\usepackage{microtype}
\usepackage{xcolor}
\usepackage{enumitem}
\usepackage[colorlinks=true,linkcolor=blue,citecolor=teal,urlcolor=blue]{hyperref}
\hypersetup{
  pdftitle={Flux control of measurement back-action and Leggett--Garg correlations in chiral quantum walks},
  pdfauthor={Paolo Luppi}
}
\usepackage{mathrsfs}
\theoremstyle{plain}
\newtheorem{theorem}{Theorem}
\newtheorem{proposition}[theorem]{Proposition}

\newtheorem{corollary}[theorem]{Corollary}
\theoremstyle{definition}

\theoremstyle{remark}

\let\Tr\relax
\DeclareMathOperator{\Tr}{Tr}

\begin{document}

\title{Flux control of measurement back-action and Leggett–Garg correlations in chiral quantum walks}

\author{Paolo Luppi}
\email{paolo.luppi@unimi.it}
\affiliation{Dipartimento di Fisica “Aldo Pontremoli”, Università degli Studi di Milano, via Celoria 16, I-20133 Milan, Italy}
\affiliation{INFN, Sezione di Milano, via Celoria 16, I-20133 Milan, Italy}

\begin{abstract}
Gauge-invariant fluxes control interference in chiral continuous-time quantum walks. We investigate how they affect sequential measurements at a single vertex, using a dichotomic observable that distinguishes return to that vertex from occupation of its complement. For a walker initially localized at the measured vertex, the complete two-time statistics, including measurement back-action and Leggett--Garg correlators, are determined exactly by the return amplitude, connecting temporal correlations to the local spectral measure and gauge-invariant closed-walk interference. At short times, the leading disturbance is independent of the Peierls phases, whereas flux sensitivity enters at higher orders through interference among closed walks. We further identify a graph-independent sufficient mechanism for saturating the L\"uders bound: flux can reduce the rooted dynamics to a balanced two-dimensional Krylov subspace with equal spectral weights, yielding a constructive flux-engineering criterion for attaining the L\"uders bound of $3/2$ at finite times. The mechanism is realized exactly on a two-flux diamond graph, where destructive interference renders additional rooted modes dark and the local back-action depends on relative combinations of the two independent fluxes. For flux-threaded cycles, an exact winding-number expansion reveals a parity-dependent onset: the leading flux contrast occurs generically at order $t^N$ for even cycles and $t^{2N}$ for odd cycles. Across the cycles examined, flux can either enhance the maximal Leggett--Garg violation or shift strong violations to earlier measurement times, with half flux driving the four-site cycle to the L\"uders bound. These results establish gauge-invariant flux as a resource for engineering local measurement back-action and temporal quantum correlations.
\end{abstract}

\maketitle

\section{Introduction}
\label{sec:introduction}

Continuous-time quantum walks (CTQWs) provide a framework for coherent
transport on discrete structures, with applications to quantum algorithms,
quantum simulation, transport phenomena, and graph-based information
processing \cite{Portugal,mulken2011,Apers2022,QW2particles}. Their
Hamiltonian encodes the graph connectivity and weights, so that
interference between propagation paths reflects both local and global
structure. Complex hopping amplitudes define chiral CTQWs and realize
synthetic gauge fields on graphs
\cite{Peierls1933,Luttinger1951,Harper1955,Lu2016,Cedzich2019,
Frigerio2021Generalized,Aidelsburger2018,Razzoli2020}. They can break
time-reversal symmetry, generate directional currents, and enable
interference-based control of transport, state transfer, routing,
recurrence, graph inference, and quantum batteries
\cite{Zimboras2013QuantumTransportEnhancement,Yu2023ControlledTransport,
Acuaviva2025StateTransfer,Bottarelli2023,Annoni2024,
Frigerio2023SwiftChiralQuantumWalks,Forghieri2025,
Cavazzoni2025PerfectRouting,Ragazzi2025,Benedetti2026NoisyRouting,
Cavazzoni2026}. Individual edge phases are gauge dependent, however, and
observable effects can depend only on the gauge-invariant fluxes
accumulated along closed cycles.

Single-time quantities, such as transition and return probabilities or
transport efficiencies, are standard diagnostics of quantum-walk
dynamics. Sequential measurements provide complementary information by
comparing statistics obtained under different measurement schedules.
The resulting multitime distributions may violate the Kolmogorov
consistency conditions satisfied by classical stochastic processes
\cite{KoflerBrukner2013,Clemente2015,Halliwell2017,Smirne2019}. Such
violations quantify measurement disturbance and are closely related to
no-signaling-in-time conditions and Leggett--Garg tests of macrorealism
under the usual noninvasiveness assumptions
\cite{LG1985,Lambert2014,VitaglianoBudroni2023}. The two diagnostics are
nevertheless logically distinct in general: exact pairwise no-signaling
in time can coexist with a Leggett--Garg violation for suitable initial
states \cite{Kim2026ExactNoSignaling}. Both have been investigated
experimentally in quantum walks
\cite{Smirne2020ExperimentalControl,Robens2015} and connected to
randomness generation and quantum-device benchmarks
\cite{Nath2024LGIRandomness,Rybotycki2026LGIQuantumComputers}.

Previous work on sequential measurements in CTQWs generated by real graph
Laplacians showed that single-time and multitime signatures can respond
differently to graph structure \cite{Luppi2026}. It remains unclear,
however, how gauge-invariant fluxes enter local sequential statistics,
at which time order their effects first appear, and how they modify
temporal correlations.

Here we consider a walker initially localized at a vertex $\nu$ and the
dichotomic return observable
$Q_\nu=2\ket{\nu}\bra{\nu}-\mathbb{I}$, which distinguishes return to
$\nu$ from occupation of its orthogonal complement. The associated
L\"uders instrument requires access to only one vertex and preserves
coherences within the complementary subspace, thereby probing return-path
interference without a fully resolved position measurement. Related
return projectors have been used in studies of quantum hitting, first
detection, recurrence, and transport optimization
\cite{Varbanov2008HittingTime,Friedman2017QuantumWalks,Roy2025,
Finocchiaro2025OptimalQuantumTransport}, including flux-controlled
recurrence under stroboscopic monitoring
\cite{Thiel2020DarkStates,Yin2025Resonances}. Here, instead, a single
intermediate measurement is used to quantify its effect on
subsequent return statistics and temporal correlations.

We show that the complete two-time statistics are determined exactly by
the return amplitude rooted at the measured vertex, connecting
measurement back-action and Leggett--Garg correlators to the rooted
spectral measure and to gauge-invariant closed-walk interference. For the localized preparation considered here, the measurement-induced change $K_\nu$ in the return probability bounds the Leggett--Garg functional, $L_3\leq 1+2K_\nu$, at the corresponding measurement times, so that any Leggett--Garg violation necessarily entails nonzero back-action. At short times, the leading
disturbance is independent of the Peierls phases, while flux sensitivity
begins at the lowest even order for which the complete sum of rooted
closed-walk contributions does not cancel.

We further identify a graph-independent sufficient mechanism for
saturating the L\"uders bound $L_3=3/2$ \cite{Fritz2010,Budroni2013,BudroniEmary2014}. If the dynamics accessible from
the measured vertex reduces to a two-dimensional rooted Krylov subspace,
the maximal violation is fixed by the two rooted spectral weights and
reaches the bound when they are equal. Gauge flux can engineer this balanced effective two-level dynamics by modifying both the rooted spectral support and its degeneracy structure. The two-flux diamond graph provides an exact multicycle
realization: its local statistics depend on relative combinations of the
independent fluxes, and an appropriate configuration saturates the bound
at an outer vertex.

For flux-threaded cycles, an exact winding-number expansion yields a
parity-dependent onset of the generic flux contrast, scaling as $t^N$
for even cycles and as $t^{2N}$ for odd cycles. Half flux drives $C_4$ to $L_3=3/2$, whereas in the larger cycles and finite observation
windows examined numerically its main effect is to shift strong
violations to earlier measurement times. Flux-threaded cycles provide analytically tractable benchmarks for the exact return-amplitude framework and the rooted-Krylov criterion, both of which apply to arbitrary finite graphs.

Section~\ref{sec:model} introduces the chiral walk and the measurement
protocol. Sections~\ref{sec:exact-statistics} and
\ref{sec:flux-onset} derive the exact statistics and their short-time
flux dependence. Section~\ref{sec:LGI} develops the Leggett--Garg
analysis and the rooted-Krylov criterion, while Sec.~\ref{sec:cycle}
treats flux-threaded cycles. Section~\ref{sec:conclusions} summarizes
the results and outlines extensions to open-system dynamics and flux
metrology.

\section{Chiral continuous-time quantum walks and sequential measurements}
\label{sec:model}

\subsection{Chiral continuous-time quantum walks}
\label{subsec:chiral_walks}

A continuous-time quantum walk (CTQW) describes the unitary evolution of a
quantum particle on a graph $G=(V,E)$ \cite{Portugal,mulken2011}. The Hilbert space is spanned by the
localized basis $\{\ket{x}\}_{x\in V}$, where $\ket{x}$ represents occupation
of vertex $x$. Throughout the paper we set $\hbar=1$.

In the Laplacian convention adopted here, the dynamics is generated by
\begin{equation}
H=D-A,
\label{eq:achiral_hamiltonian}
\end{equation}
where $D$ is the degree matrix and $A$ is the adjacency matrix. Since $H$ is
real and symmetric, the transition probabilities
\begin{equation}
p_{j\rightarrow k}(t)
=
\left|
\bra{k}e^{-iHt}\ket{j}
\right|^2
\end{equation}
satisfy
\begin{align}
p_{j\rightarrow k}(t)
&=
p_{j\rightarrow k}(-t),
\label{eq:time_symmetry}
\\
p_{j\rightarrow k}(t)
&=
p_{k\rightarrow j}(t),
\label{eq:reciprocity}
\end{align}
expressing time-reversal invariance and reciprocity, respectively.

Chirality is introduced by assigning an oriented Peierls phase
$\theta_{jk}=-\theta_{kj}$ to each edge \cite{Peierls1933,Luttinger1951,Harper1955}. The resulting complex adjacency
matrix is
\begin{equation}
(A_\chi)_{jk}
=
\begin{cases}
e^{i\theta_{jk}}, & \{j,k\}\in E,\\
0, & \text{otherwise},
\end{cases}
\label{eq:chiral_adjacency}
\end{equation}
and the chiral walk is generated by
\begin{equation}
H_\chi=D-A_\chi.
\label{eq:chiral_hamiltonian}
\end{equation}
The antisymmetry of the phases ensures that $H_\chi$ is Hermitian.
Complex hopping amplitudes modify the interference between propagation
paths and can break both time-reversal symmetry and reciprocity, thereby
producing directional transport and nonvanishing probability currents
\cite{Peierls1933,Luttinger1951,Harper1955,Lu2016,Cedzich2019}.

The individual edge phases are gauge dependent. Consider the diagonal
unitary transformation
\begin{equation}
\Lambda_{\boldsymbol{\beta}}
=
\sum_{j\in V}
e^{i\beta_j}\ket{j}\!\bra{j}.
\end{equation}
The Hamiltonian transforms as
\begin{equation}
H_\chi
\longmapsto
H_\chi'
=
\Lambda_{\boldsymbol{\beta}}
H_\chi
\Lambda_{\boldsymbol{\beta}}^\dagger,
\label{eq:gauge_transformation}
\end{equation}
which corresponds to the phase transformation
\begin{equation}
\theta_{jk}
\longmapsto
\theta_{jk}+\beta_j-\beta_k.
\label{eq:edge_phase_transformation}
\end{equation}
The transition amplitudes acquire only vertex-dependent phase factors,
and hence the transition probabilities are gauge invariant:
$\left|\bra{k}e^{-iH_\chi't}\ket{j}\right|^2
=
\left|\bra{k}e^{-iH_\chi t}\ket{j}\right|^2$.

Physical observables can consequently depend only on gauge-invariant
combinations of edge phases. For an oriented cycle
$C=(j_1,j_2,\ldots,j_m,j_1)$ of length $m$, the corresponding flux is
\begin{equation}
\Phi_C
=
\sum_{r=1}^{m}
\theta_{j_{r+1}j_r}
\pmod{2\pi},
\qquad
j_{m+1}\equiv j_1.
\label{eq:gauge_flux}
\end{equation}
For a connected graph, the independent physical phase parameters may be
chosen as the fluxes associated with a basis of independent cycles \cite{Lu2016,Cedzich2019}. We
denote their collection by $\boldsymbol{\Phi}$ and write the Hamiltonian
and propagator as
\begin{equation}
H_\chi(\boldsymbol{\Phi}),
\qquad
U_t(\boldsymbol{\Phi})
=
e^{-iH_\chi(\boldsymbol{\Phi})t}.
\end{equation}

Time reversal complex conjugates the hopping amplitudes and maps
$\boldsymbol{\Phi}\mapsto-\boldsymbol{\Phi}$. In a graph with a single
independent cycle, the fluxes $\Phi=0$ and $\Phi=\pi$ are invariant under
this transformation modulo $2\pi$, whereas generic values of $\Phi$ break
time-reversal symmetry. We will therefore distinguish between dependence
on a gauge-invariant flux and genuine time-reversal-symmetry breaking.

\subsection{Single-node sequential measurements}
\label{subsec:measurement_protocol}

We consider a walker initially localized at a vertex $\nu$, $\rho_0=\ket{\nu}\!\bra{\nu}$.
To probe the dynamics locally, we use the dichotomic return observable
\begin{equation}
Q_\nu
=
2\Pi_\nu-\mathbb{I},
\qquad
\Pi_\nu=\ket{\nu}\!\bra{\nu}.
\label{eq:return_observable}
\end{equation}
Its outcomes distinguish return to the initial vertex,
$q=+1$, from occupation of the complementary subspace,
$q=-1$. The corresponding projectors are $\Pi_{+}=\Pi_\nu$ and $\Pi_{-}=\mathbb{I}-\Pi_\nu$.
The measurement protocol also differs essentially from that adopted in
Ref.~\cite{Luppi2026}. There, the intermediate measurement resolved the
complete position observable and the associated nonselective channel
removed all coherences in the site basis. Here we instead use the
dichotomic Lüders instrument
$\{\Pi_\nu,\mathbb{I}-\Pi_\nu\}$, which distinguishes return to a single
reference vertex from its complement while preserving coherences within
the complementary subspace. The dichotomic observable is a coarse graining of position, but its L\"uders instrument is not operationally equivalent to a fully resolved position measurement followed by outcome coarse graining. It requires access to only one vertex,
reduces the two-time statistics exactly to the rooted return amplitude, and
directly provides the dichotomic correlators entering the Leggett--Garg
inequalities.
The rooted return amplitude and probability are defined as
\begin{align}
A_\nu(t;\boldsymbol{\Phi})
&=
\bra{\nu}
U_t(\boldsymbol{\Phi})
\ket{\nu},
\label{eq:return_amplitude}
\\
p_\nu(t;\boldsymbol{\Phi})
&=
\left|
A_\nu(t;\boldsymbol{\Phi})
\right|^2.
\label{eq:return_probability}
\end{align}
These quantities are gauge invariant and are determined by the spectral
measure of $H_\chi$ rooted at $\nu$.

For sequential measurements at times
$t_1<t_2<\cdots<t_n$, we introduce the unitary channel
\begin{equation}
\mathcal{U}_t[\rho]
=
U_t\rho U_t^\dagger
\end{equation}
and the measurement operations
\begin{equation}
\mathcal{M}_{q}[\rho]
=
\Pi_q\rho\Pi_q,
\qquad
q\in\{+1,-1\}.
\end{equation}
The joint probability of obtaining the sequence of outcomes
$q_1,\ldots,q_n$ is then
\begin{equation}
\begin{split}
P(q_n,t_n;\ldots;q_1,t_1)
=
\Tr\Big[
\mathcal{M}_{q_n}
\mathcal{U}_{t_n-t_{n-1}}
\cdots
\mathcal{M}_{q_1}
\mathcal{U}_{t_1}[\rho_0]
\Big].
\end{split}
\label{eq:multitime_probability}
\end{equation}

To quantify measurement back-action, we compare the return probability at
time $t$ with and without a nonselective measurement at an intermediate
time $s<t$. The nonselective Lüders channel is
\begin{equation}
\mathcal{D}_\nu[\rho]
=
\Pi_{+}\rho\Pi_{+}
+
\Pi_{-}\rho\Pi_{-},
\label{eq:luders_channel}
\end{equation}
and the measured return probability is
\begin{equation}
p_\nu^{(s)}(t;\boldsymbol{\Phi})
=
\Tr\!\left[
\Pi_\nu
\mathcal{U}_{t-s}
\mathcal{D}_\nu
\mathcal{U}_{s}[\rho_0]
\right].
\label{eq:measured_return_probability}
\end{equation}
The corresponding Kolmogorov inconsistency is
\begin{equation}
K_{\nu,\boldsymbol{\Phi}}(s,t)
=
\left|
p_\nu^{(s)}(t;\boldsymbol{\Phi})
-
p_\nu(t;\boldsymbol{\Phi})
\right|.
\label{eq:kolmogorov_return}
\end{equation}
For the dichotomic return measurement, this is the total-variation distance
between the final outcome distributions obtained with and without the
intermediate measurement \cite{Clemente2015,Halliwell2017,Smirne2020ExperimentalControl,Luppi2026}.

Beyond the disturbance of the final-time marginal, sequential measurements also define temporal correlations between outcomes observed at different times. These correlations enter the Leggett--Garg inequalities \cite{LG1985,Lambert2014}, often regarded as temporal analogues of Bell inequalities, although their underlying assumptions differ substantially. They provide tests of macrorealistic descriptions of the dynamics. The two-time probabilities determine the temporal correlators \begin{equation} C_{ij} = \sum_{q_i,q_j=\pm1} q_i q_j\, P(q_j,t_j;q_i,t_i), \label{eq:two_time_correlator} \end{equation} from which we construct the three-time Leggett--Garg functional \begin{equation} L_3 = C_{12}+C_{23}-C_{13}. \label{eq:L3_definition} \end{equation} Under the standard assumptions of macrorealism and noninvasive measurability, the Leggett--Garg inequality reads \begin{equation} L_3\leq 1. \end{equation} A violation therefore rules out the simultaneous validity of these assumptions for the observed statistics \cite{LG1985,Lambert2014,VitaglianoBudroni2023,Robens2015}.
 In the following, we use $K_{\nu,\boldsymbol{\Phi}}$ and $L_3$
as complementary probes of the way gauge-invariant fluxes modify
measurement disturbance and temporal correlations.  
\section{Exact sequential return statistics}
\label{sec:exact-statistics}

The combination of the localized initial state, the rank-one return
projector, and time-homogeneous unitary evolution implies that the complete
two-time joint distribution can be expressed solely in terms of the return
amplitude at the measured vertex. The following result holds for any
finite graph, arbitrary hopping phases, and any time-independent Hamiltonian.

\begin{proposition}[Exact two-time return statistics]
\label{prop:two_time_probabilities}
Let the walker be initially localized at $\nu$, and let measurements of
$Q_\nu$ be performed at times $t_1<t_2$. Defining
$\tau=t_2-t_1$ and
\begin{equation}
P_{q_1q_2}(t_1,t_2)
:=
P(q_2,t_2;q_1,t_1),
\qquad
q_1,q_2\in\{+,-\},
\label{eq:joint_probability_notation}
\end{equation}
where the first and second indices refer, respectively, to the outcomes at
$t_1$ and $t_2$, the joint probabilities are
\begin{align}
P_{++}(t_1,t_2)
&=
p_\nu(t_1)\,p_\nu(\tau),
\label{eq:Ppp}
\\
P_{+-}(t_1,t_2)
&=
p_\nu(t_1)
\left[
1-p_\nu(\tau)
\right],
\label{eq:Ppm}
\\
P_{-+}(t_1,t_2)
&=
\left|
A_\nu(t_2)
-
A_\nu(\tau)A_\nu(t_1)
\right|^2,
\label{eq:Pmp}
\\
P_{--}(t_1,t_2)
&=
1-p_\nu(t_1)
-
\left|
A_\nu(t_2)
-
A_\nu(\tau)A_\nu(t_1)
\right|^2.
\label{eq:Pmm}
\end{align}
Here and below, the dependence on $\boldsymbol{\Phi}$ is left implicit
whenever no ambiguity arises.
\end{proposition}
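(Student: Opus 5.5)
The plan is to evaluate the sequential formula \eqref{eq:multitime_probability} directly for $n=2$ and pure initial state $\rho_0=\ket{\nu}\!\bra{\nu}$. Since every operation is either unitary conjugation or a projector sandwich, the state stays pure (unnormalized) along each branch. Each joint probability is therefore the squared norm of a vector $\Pi_{q_2}U_\tau\Pi_{q_1}U_{t_1}\ket{\nu}$. I will use time homogeneity, $U_{t_2}=U_\tau U_{t_1}$, and the rank-one structure $\Pi_+=\ket{\nu}\!\bra{\nu}$.

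\emph{Branch $q_1=+$.} Here $\Pi_+U_{t_1}\ket{\nu}=A_\nu(t_1)\ket{\nu}$, so after the first measurement the walker is again localized at $\nu$ with weight $p_\nu(t_1)$. Evolving by $\tau$ and projecting gives $\|\Pi_+U_\tau\ket{\nu}\|^2=p_\nu(\tau)$ and $\|\Pi_-U_\tau\ket{\nu}\|^2=1-p_\nu(\tau)$. Multiplying by $p_\nu(t_1)$ yields \eqref{eq:Ppp} and \eqref{eq:Ppm}.

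\emph{Branch $q_1=-$.} Write $\Pi_-=\mathbb{I}-\ket{\nu}\!\bra{\nu}$, so that
\begin{equation*}
\Pi_-U_{t_1}\ket{\nu}=U_{t_1}\ket{\nu}-A_\nu(t_1)\ket{\nu}.
\end{equation*}
Applying $U_\tau$ and taking the overlap with $\bra{\nu}$ gives
\begin{equation*}
\bra{\nu}U_\tau U_{t_1}\ket{\nu}-A_\nu(t_1)\bra{\nu}U_\tau\ket{\nu}=A_\nu(t_2)-A_\nu(\tau)A_\nu(t_1).
\end{equation*}
Its squared modulus is \eqref{eq:Pmp}. For $P_{--}$, unitarity of $U_\tau$ and $\Pi_++\Pi_-=\mathbb{I}$ give $P_{-+}+P_{--}=\|\Pi_-U_{t_1}\ket{\nu}\|^2=1-p_\nu(t_1)$, which yields \eqref{eq:Pmm}.

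No step is a real obstacle. The only point needing care is the $q_1=-$ branch: one must use the composition law $U_\tau U_{t_1}=U_{t_2}$, which requires a time-independent Hamiltonian, so that the unconditioned term becomes exactly $A_\nu(t_2)$. Everything else follows from the rank-one projector collapsing onto $\ket{\nu}$, and from trace preservation for the complementary outcome. Gauge invariance is automatic, since only $A_\nu$ and its modulus appear and $A_\nu$ is invariant under $\Lambda_{\boldsymbol\beta}$, up to the phase $e^{i\beta_\nu}e^{-i\beta_\nu}=1$.
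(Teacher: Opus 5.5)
Your proposal is correct and follows essentially the same route as the paper's derivation in Appendix~\ref{app:exact-statistics}. Like the paper, you write each joint probability as a squared branch norm, collapse the $q_1=+$ branch using the rank-one projector, and obtain the $(-,+)$ amplitude $A_\nu(t_2)-A_\nu(\tau)A_\nu(t_1)$ from $\Pi_\nu^\perp=\mathbb{I}-\Pi_\nu$ together with $U_\tau U_{t_1}=U_{t_2}$. You then fix $P_{--}$ through $P_{-+}+P_{--}=1-p_\nu(t_1)$.
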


A derivation of Eqs.~\eqref{eq:Ppp}--\eqref{eq:Pmm} is given in
Appendix~\ref{app:exact-statistics}. The nontrivial term
$P_{-+}$ describes return to $\nu$ after projection onto the complementary
subspace at the first measurement. It is an interference quantity
involving the uninterrupted return amplitude at $t_2$ and the product of
the return amplitudes over the two successive time intervals.

For a nonselective intermediate measurement at time $s<t$, summing over
the two possible outcomes gives
\begin{equation}
\begin{split}
p_\nu^{(s)}(t;\boldsymbol{\Phi})
={}&
p_\nu(s;\boldsymbol{\Phi})
p_\nu(t-s;\boldsymbol{\Phi})
\\
&+
\left|
A_\nu(t;\boldsymbol{\Phi})
-
A_\nu(t-s;\boldsymbol{\Phi})
A_\nu(s;\boldsymbol{\Phi})
\right|^2.
\end{split}
\label{eq:exact_measured_return}
\end{equation}
The two terms correspond, respectively, to return after the intermediate
outcomes $+1$ and $-1$. The Kolmogorov inconsistency introduced in
Eq.~\eqref{eq:kolmogorov_return} can equivalently be written as
\begin{equation}
\begin{split}
K_{\nu,\boldsymbol{\Phi}}(s,t)
=
2\Big|
&p_\nu(s;\boldsymbol{\Phi})
p_\nu(t-s;\boldsymbol{\Phi})
\\
&-
\operatorname{Re}\!\left[
A_\nu(t;\boldsymbol{\Phi})^*
A_\nu(t-s;\boldsymbol{\Phi})
A_\nu(s;\boldsymbol{\Phi})
\right]
\Big|.
\end{split}
\label{eq:exact_K}
\end{equation}
Thus, the measurement back-action is determined by the mismatch between
the product of two return probabilities and the corresponding coherent
concatenation of return amplitudes. Equation~\eqref{eq:exact_K} provides
the starting point for the short-time analysis in
Sec.~\ref{sec:flux-onset}.

For any pair $t_i<t_j$, evaluated in an independent two-time run,
$C_{ij}=1-2P_{+-}(t_i,t_j)-2P_{-+}(t_i,t_j)$. Using
Eqs.~\eqref{eq:Ppm} and \eqref{eq:Pmp}, one obtains
\begin{equation}
\begin{split}
C_{ij}
={}&
1
-
2p_\nu(t_i)
\left[
1-p_\nu(t_j-t_i)
\right]
\\
&-
2\left|
A_\nu(t_j)
-
A_\nu(t_j-t_i)A_\nu(t_i)
\right|^2.
\end{split}
\label{eq:exact_correlator}
\end{equation}
Although the Hamiltonian is time independent, the correlator generally
depends separately on $t_i$ and $t_j$, because of the localized initial
condition and the state update induced by the first measurement.

\paragraph{Global flux-reversal invariance.}

For the class of chiral CTQWs considered here, the simultaneous reversal
of all gauge-invariant fluxes gives
$H_\chi(-\boldsymbol{\Phi})=H_\chi(\boldsymbol{\Phi})^*$ in the vertex
basis. Consequently,
\begin{equation}
\begin{split}
A_\nu(t;-\boldsymbol{\Phi})
&=
\left[
\bra{\nu}
e^{iH_\chi(\boldsymbol{\Phi})t}
\ket{\nu}
\right]^*
=
\left[
A_\nu(-t;\boldsymbol{\Phi})
\right]^*
\\
&=
A_\nu(t;\boldsymbol{\Phi}),
\end{split}
\label{eq:A_flux_even}
\end{equation}
where Hermiticity was used in the last equality.

\begin{corollary}[Global flux-reversal invariance]
\label{cor:flux_reversal}
All return-based quantities considered in this work are invariant under
the simultaneous reversal of all independent fluxes:
\begin{align}
p_\nu(t;-\boldsymbol{\Phi})
&=
p_\nu(t;\boldsymbol{\Phi}),
&
K_{\nu,-\boldsymbol{\Phi}}(s,t)
&=
K_{\nu,\boldsymbol{\Phi}}(s,t),
\nonumber
\\
C_{ij}(-\boldsymbol{\Phi})
&=
C_{ij}(\boldsymbol{\Phi}),
&
L_3(-\boldsymbol{\Phi})
&=
L_3(\boldsymbol{\Phi}).
\label{eq:L3_flux_even}
\end{align}
\end{corollary}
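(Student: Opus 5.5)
The plan is to reduce everything to the single identity~\eqref{eq:A_flux_even}, $A_\nu(t;-\boldsymbol{\Phi})=A_\nu(t;\boldsymbol{\Phi})$, for all real $t$. Every quantity in Corollary~\ref{cor:flux_reversal} is a fixed function of return amplitudes evaluated at various times, so the invariance follows once that identity is secure.

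\textbf{Step 1: justify the amplitude identity.} First I will check $H_\chi(-\boldsymbol{\Phi})=H_\chi(\boldsymbol{\Phi})^*$ up to gauge. Complex conjugation maps every edge phase $\theta_{jk}\mapsto-\theta_{jk}$. By Eq.~\eqref{eq:gauge_flux}, this sends each cycle flux $\Phi_C\mapsto-\Phi_C$, while $D$ is unchanged. If the phases are only specified through the fluxes, any other Hamiltonian realizing $-\boldsymbol{\Phi}$ differs from $H_\chi(\boldsymbol{\Phi})^*$ by a diagonal gauge transformation $\Lambda_{\boldsymbol{\beta}}$. This conjugation leaves $\bra{\nu}\cdot\ket{\nu}$ unchanged, because the phases $e^{i\beta_\nu}e^{-i\beta_\nu}$ cancel on the diagonal element.

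Next, because $D$ and $A_\chi$ are written in the real vertex basis, $\bra{\nu}e^{-iH^*t}\ket{\nu}=\big(\bra{\nu}e^{iHt}\ket{\nu}\big)^*$. Finally, Hermiticity gives $\big(\bra{\nu}e^{iHt}\ket{\nu}\big)^*=\bra{\nu}(e^{iHt})^\dagger\ket{\nu}=\bra{\nu}e^{-iHt}\ket{\nu}$. This step is essentially the argument already displayed before the corollary. The only point needing care is the gauge freedom in the choice of phases realizing $-\boldsymbol{\Phi}$, which the diagonal-gauge argument handles.

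\textbf{Step 2: propagate to every quantity.} I will substitute the identity into each formula in turn:
\begin{itemize}[leftmargin=*]
\item $p_\nu=|A_\nu|^2$ is invariant immediately.
\item $K_{\nu,\boldsymbol{\Phi}}(s,t)$ is invariant through Eq.~\eqref{eq:exact_K}, which contains only $A_\nu$ at the times $s$, $t-s$ and $t$.
\item Each $C_{ij}$ is invariant through Eq.~\eqref{eq:exact_correlator}, which contains only $A_\nu$ at $t_i$, $t_j$ and $t_j-t_i$.
\item $L_3$ is invariant because it is the linear combination~\eqref{eq:L3_definition} of the $C_{ij}$.
\end{itemize}
More strongly, all four joint probabilities of Proposition~\ref{prop:two_time_probabilities} are invariant. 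Since $K$ and every $C_{ij}$ are built from them, I will state this stronger invariance of the full joint distribution as well.

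\textbf{Main obstacle.} The only subtle point is the wording ``all return-based quantities considered in this work.'' Anything built from $Q_\nu$ statistics with the localized preparation $\rho_0=\ket{\nu}\!\bra{\nu}$ reduces to $A_\nu$ only for two-time runs, which is exactly what Proposition~\ref{prop:two_time_probabilities} provides. Each correlator $C_{ij}$ is defined from an independent two-time run, so this covers $L_3$.

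If a reader wants invariance of general multitime joint probabilities $P(q_n,\ldots,q_1)$ as well, I would use a second argument. Write each probability as a trace with real projectors $\Pi_q$, then conjugate the whole expression, and use the fact that probabilities are real. This gives
\[
P(\ldots;-\boldsymbol{\Phi})=P(\ldots;\boldsymbol{\Phi})\big|_{t\to-t}.
\]
That relation is time-reversed, not identical, so it does not by itself yield invariance. I will therefore restrict the corollary explicitly to the two-time quantities listed.
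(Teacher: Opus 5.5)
Your proposal is correct and follows the paper's own argument. The paper also derives $A_\nu(t;-\boldsymbol{\Phi})=A_\nu(t;\boldsymbol{\Phi})$ from $H_\chi(-\boldsymbol{\Phi})=H_\chi(\boldsymbol{\Phi})^*$ together with Hermiticity, and then reads off invariance of $p_\nu$, $K$, $C_{ij}$ and $L_3$ from the exact return-amplitude formulas; your diagonal-gauge remark only makes explicit the paper's phrase ``in the vertex basis.'' One correction to your closing aside: the reduction to $A_\nu$ is not limited to two-time runs. Expanding each $\Pi_\nu^\perp=\mathbb{I}-\Pi_\nu$ and using normalization writes every multitime return probability in terms of return amplitudes at sums of consecutive intervals, so these are flux-reversal invariant as well, although this goes beyond what the corollary claims.
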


The single-node protocol can therefore detect flux-induced modifications
of the dynamics, but it cannot distinguish a flux configuration
$\boldsymbol{\Phi}$ from its global reversal $-\boldsymbol{\Phi}$. For a
graph with one independent flux, the return statistics are even and
$2\pi$-periodic functions of $\Phi$. The points $\Phi=0$ and $\Phi=\pi$
are invariant under time reversal modulo $2\pi$, whereas generic values
break time-reversal symmetry while producing the same return statistics
as the oppositely oriented flux $-\Phi$.

In a multicycle graph, global flux-reversal invariance does not imply
invariance under the reversal of only one independent flux. Return
statistics may depend on relative combinations of the fluxes, and a
partial reversal can therefore modify the observed dynamics. This point
will be illustrated explicitly for the diamond graph in
Sec.~\ref{subsec:diamond}.

To quantify the modification induced by the gauge-invariant fluxes
relative to the zero-flux dynamics, we define
\begin{equation}
\Delta_{\boldsymbol{\Phi}}K_\nu(s,t)
=
K_{\nu,\boldsymbol{\Phi}}(s,t)
-
K_{\nu,\boldsymbol{0}}(s,t),
\label{eq:flux_K_contrast}
\end{equation}
where $\boldsymbol{0}$ denotes the configuration in which all independent
cycle fluxes vanish.

\section{Gauge-invariant onset of flux sensitivity}
\label{sec:flux-onset}

We now use the exact expression in Eq.~\eqref{eq:exact_K} to determine
how gauge-invariant fluxes enter the short-time sequential statistics.
We introduce the final evolution time $t$ and place the intermediate
measurement at a fixed fraction of it, $s=\alpha t$ with $0<\alpha<1$,
so that the second evolution interval is $t-s=(1-\alpha)t$.

For a fixed measured vertex $\nu$, we define the rooted spectral moments
\begin{equation}
\mu_n^{(\nu)}(\boldsymbol{\Phi})
=
\bra{\nu}
H_\chi(\boldsymbol{\Phi})^n
\ket{\nu}.
\label{eq:rooted_moments}
\end{equation}
Since $H_\chi$ is Hermitian, all
$\mu_n^{(\nu)}$ are real. We also introduce the centered Hamiltonian and the corresponding centered
moments,
\begin{equation}
\widetilde H_\nu
=
H_\chi-\mu_1^{(\nu)}\mathbb{I},
\qquad
\widetilde\mu_n^{(\nu)}
=
\bra{\nu}\widetilde H_\nu^{\,n}\ket{\nu}.
\label{eq:centered_definitions}
\end{equation}
In particular,
\begin{equation}
\gamma_\nu
:=
\widetilde\mu_2^{(\nu)}=
\mu_2^{(\nu)}
-
\left(\mu_1^{(\nu)}\right)^2=
\sum_{x\neq\nu}
\left|
(H_\chi)_{x\nu}
\right|^2.
\label{eq:local_coupling_strength}
\end{equation}
For unit-modulus hopping on an unweighted graph, $\gamma_\nu$ reduces to
the degree of the measured vertex.

\subsection{Even short-time structure}
\label{subsec:even_expansion}

Hermiticity implies
$A_\nu(-t;\boldsymbol{\Phi})
=
A_\nu(t;\boldsymbol{\Phi})^*$.
Consequently, both the measured and unmeasured return probabilities
are invariant under the simultaneous reversal of all time intervals.

\begin{proposition}[Even short-time structure]
\label{prop:even_short_time}
Let $\gamma_\nu>0$ and $0<\alpha<1$. In a neighborhood of $t=0$,
the return-based Kolmogorov inconsistency admits the expansion
\begin{equation}
K_{\nu,\boldsymbol{\Phi}}(\alpha t,t)
=
\sum_{m=1}^{\infty}
k_{2m}^{(\nu)}
(\alpha,\boldsymbol{\Phi})t^{2m}.
\label{eq:even_K_expansion}
\end{equation}
In particular, the short-time expansion contains only even powers of $t$.
\end{proposition}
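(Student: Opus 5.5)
The plan is to remove the absolute value in Eq.~\eqref{eq:exact_K} by showing that its argument is an even real-analytic function of $t$ that is strictly positive for small $t\neq0$. Concretely, set $s=\alpha t$ and define
\begin{equation*}
F(t)=p_\nu(\alpha t)\,p_\nu\bigl((1-\alpha)t\bigr)-\operatorname{Re}\!\left[A_\nu(t)^*A_\nu\bigl((1-\alpha)t\bigr)A_\nu(\alpha t)\right],
\end{equation*}
so that $K_{\nu,\boldsymbol{\Phi}}(\alpha t,t)=2|F(t)|$ by Eq.~\eqref{eq:exact_K}. For a finite graph, $A_\nu(t)=\sum_{n}(-it)^n\mu_n^{(\nu)}/n!$ is an entire function of $t$. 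Hence $F$ is a finite combination of products of entire functions and their conjugates evaluated at real $t$, and is real-analytic on all of $\mathbb{R}$ with a convergent power series.

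\textbf{Step 1: parity.} Use $A_\nu(-t)=A_\nu(t)^*$, which follows from Hermiticity. Under $t\mapsto -t$, each return probability $p_\nu$ is unchanged, and the triple product $A_\nu(t)^*A_\nu((1-\alpha)t)A_\nu(\alpha t)$ is mapped to its complex conjugate, so its real part is unchanged. Thus $F(-t)=F(t)$, and only even powers of $t$ appear in its Taylor series. Also $A_\nu(0)=1$ gives $F(0)=1-1=0$, so the series starts at order $t^2$ at the earliest.

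\textbf{Step 2: leading coefficient.} Factor $A_\nu(t)=e^{-i\mu_1^{(\nu)}t}\widetilde A_\nu(t)$ with $\widetilde A_\nu(t)=\bra{\nu}e^{-i\widetilde H_\nu t}\ket{\nu}=1-\gamma_\nu t^2/2+O(t^3)$. The global phases cancel in both terms of $F$, because the phase arguments satisfy $-t+(1-\alpha)t+\alpha t=0$. Inserting $p_\nu(u)=1-\gamma_\nu u^2+O(u^4)$ and expanding the real part of the centered triple product to second order gives
\begin{equation*}
F(t)=\gamma_\nu\Bigl[\tfrac12\bigl(1+\alpha^2+(1-\alpha)^2\bigr)-\alpha^2-(1-\alpha)^2\Bigr]t^2+O(t^4)=\gamma_\nu\,\alpha(1-\alpha)\,t^2+O(t^4).
\end{equation*}
The $O(t^3)$ terms, which involve $\widetilde\mu_3^{(\nu)}$, must either cancel in the real part or be excluded by Step 1. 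Step 1 already forces this, so I would not track them explicitly.

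\textbf{Step 3: removing the modulus, the main obstacle.} Since $\gamma_\nu>0$ and $0<\alpha<1$, the leading coefficient is strictly positive. By continuity of the analytic remainder, $F(t)>0$ on a punctured neighborhood of $t=0$. On that neighborhood $|F|=F$, so $K_{\nu,\boldsymbol{\Phi}}(\alpha t,t)=2F(t)$ inherits the convergent even expansion. This yields Eq.~\eqref{eq:even_K_expansion} with $k_2^{(\nu)}=2\gamma_\nu\alpha(1-\alpha)$, independent of $\boldsymbol{\Phi}$.

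The delicate point is precisely this sign control. Without $\gamma_\nu>0$, the modulus of an even analytic function could in principle fail to be analytic at a sign change. Away from $t=0$, zeros of $F$ may occur, which is why the statement is claimed only in a neighborhood of $t=0$.
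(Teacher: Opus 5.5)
Your proof is correct and takes essentially the same approach as the paper (Appendix~B). Both show that the signed disturbance $2F(t)$ is an even analytic function of $t$, using $A_\nu(-u)=A_\nu(u)^*$, and that its quadratic coefficient $2\gamma_\nu\alpha(1-\alpha)$ is strictly positive, which lets the modulus in Eq.~\eqref{eq:exact_K} be dropped near $t=0$. The only cosmetic differences are that you work directly with the $\operatorname{Re}$ form rather than the measured-return decomposition, and you stop at the quadratic coefficient, which is all the statement needs, where the paper also derives the quartic term.
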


The proof is given in Appendix~\ref{app:short-time}. If
$\gamma_\nu=0$, the state $\ket{\nu}$ is an eigenstate of the Hamiltonian
and $K_{\nu,\boldsymbol{\Phi}}(s,t)=0$
for all $s$ and $t$.

\paragraph{Leading flux-dependent correction.}
\label{subsec:quartic_expansion}

We define
\begin{equation}
\zeta_\nu
=
\frac{\widetilde\mu_4^{(\nu)}}{12}
+
\frac{\gamma_\nu^2}{4}.
\label{eq:chi_definition}
\end{equation}
The expansion of the exact expression
Eq.~\eqref{eq:exact_K} is

\begin{equation}
\begin{split}
K_{\nu,\boldsymbol{\Phi}}(\alpha t,t)
=& 2\gamma_\nu\alpha(1-\alpha)t^2
+
\Bigg\{
\zeta_\nu
\left[
\alpha^4+(1-\alpha)^4-1
\right]\\
&+
2\gamma_\nu^2
\alpha^2(1-\alpha)^2
\Bigg\}t^4
+
O(t^6).
\end{split}
\label{eq:K_quartic_expansion}
\end{equation}

The quadratic coefficient depends only on the moduli of the couplings
connected to $\nu$ and is therefore insensitive to the Peierls phases.
It coincides with the universal short-time law obtained for
time-reversal-invariant CTQWs. For real unit-weight Laplacians,
$\gamma_\nu=d_\nu$, so the leading quadratic term agrees with the full
position-measurement result of Ref.~\cite{Luppi2026}. Beyond this order,
however, the return protocol differs: its common-time expansion contains
only even powers, and its flux dependence is organized by
gauge-invariant rooted closed walks.

For fixed hopping strengths and on-site energies, $\gamma_\nu$ is
independent of the fluxes. The contrast with respect to the zero-flux
dynamics is consequently
\begin{align}
\Delta_{\boldsymbol{\Phi}}
K_\nu(\alpha t,t)
={}&
K_{\nu,\boldsymbol{\Phi}}(\alpha t,t)
-
K_{\nu,\boldsymbol{0}}(\alpha t,t)
\nonumber\\
={}&
\frac{
\Delta_{\boldsymbol{\Phi}}
\widetilde\mu_4^{(\nu)}
}{12}
\left[
\alpha^4+(1-\alpha)^4-1
\right]t^4
+
O(t^6),
\label{eq:flux_contrast_quartic}
\end{align}
where $\Delta_{\boldsymbol{\Phi}}
\widetilde\mu_4^{(\nu)}
=
\widetilde\mu_4^{(\nu)}(\boldsymbol{\Phi})
-
\widetilde\mu_4^{(\nu)}(\boldsymbol{0})$.
Thus, the fourth order is the earliest order at which flux sensitivity
can occur in the return-based Kolmogorov inconsistency. The quartic
coefficient need not be flux dependent, however: symmetries and
interference cancellations may postpone the first nonzero contribution to
sixth or higher order.

\paragraph{Rooted closed-walk interpretation.}
\label{subsec:closed_walks}

The rooted moments can be expanded as
\begin{equation}
\mu_n^{(\nu)}
=
\sum_{x_1,\ldots,x_{n-1}}
(H_\chi)_{\nu x_{n-1}}
(H_\chi)_{x_{n-1}x_{n-2}}
\cdots
(H_\chi)_{x_1\nu}.
\label{eq:moment_closed_walks}
\end{equation}
Each term represents a closed walk of length $n$ starting and ending at
$\nu$. Immediate backtracking carries no net phase, whereas walks winding
around graph cycles can acquire phases determined by the enclosed
gauge-invariant fluxes.

The detailed gauge transformation of these products is discussed in
Appendix~\ref{app:closed-walks}. The relevant observable coefficient emerges only after all closed-walk contributions at a given order are summed and combined into the corresponding centered moments. The shortest accessible cycle therefore
sets only a possible geometric scale for flux sensitivity; parity,
symmetries, and destructive interference can delay its observable onset.

Accordingly, the first flux-dependent term in
Eq.~\eqref{eq:even_K_expansion} is determined by the lowest even order at
which the corresponding gauge-invariant sum of rooted closed walks does
not cancel. In Sec.~\ref{sec:cycle}, this mechanism is evaluated
explicitly for flux-threaded cycles.
\subsection{A two-flux multicycle example}
\label{subsec:diamond}

The rooted closed-walk formulation shows that the onset of flux
sensitivity is not determined solely by the shortest cycle length. To
illustrate this point beyond vertex-transitive single-cycle systems, we
consider the diamond graph formed by two triangles sharing one edge,
\begin{equation*}
V=\{0,1,2,3\},
\qquad
E=
\bigl\{
\{0,1\},\{0,2\},\{1,2\},
\{0,3\},\{1,3\}
\bigr\}.
\label{eq:diamond_graph}
\end{equation*}
The graph is not vertex transitive (Fig.~\ref{fig:diamond_fluxes}): vertices $0$ and $1$ have degree
three, whereas vertices $2$ and $3$ have degree two. Its cycle-space
dimension is $|E|-|V|+1=2$. We choose the oriented triangles
$C_1=(0,1,2,0)$ and $C_2=(0,1,3,0)$ as a cycle basis, with corresponding fluxes $\Phi_1$ and $\Phi_2$. The
remaining simple cycle, $C_{\mathrm{out}}=(0,2,1,3,0)$, carries the
dependent flux $\Phi_{\mathrm{out}}=\Phi_2-\Phi_1$, up to the chosen
orientation. A convenient gauge gives the Laplacian Hamiltonian
\begin{equation}
H_{\diamond}(\Phi_1,\Phi_2)
=
\begin{pmatrix}
3 & -1 & -1 & -1\\
-1 & 3 & -e^{-i\Phi_1} & -e^{-i\Phi_2}\\
-1 & -e^{i\Phi_1} & 2 & 0\\
-1 & -e^{i\Phi_2} & 0 & 2
\end{pmatrix}.
\label{eq:diamond_hamiltonian}
\end{equation}
\begin{figure}[t]
    \centering
    \includegraphics[ width=\columnwidth ]{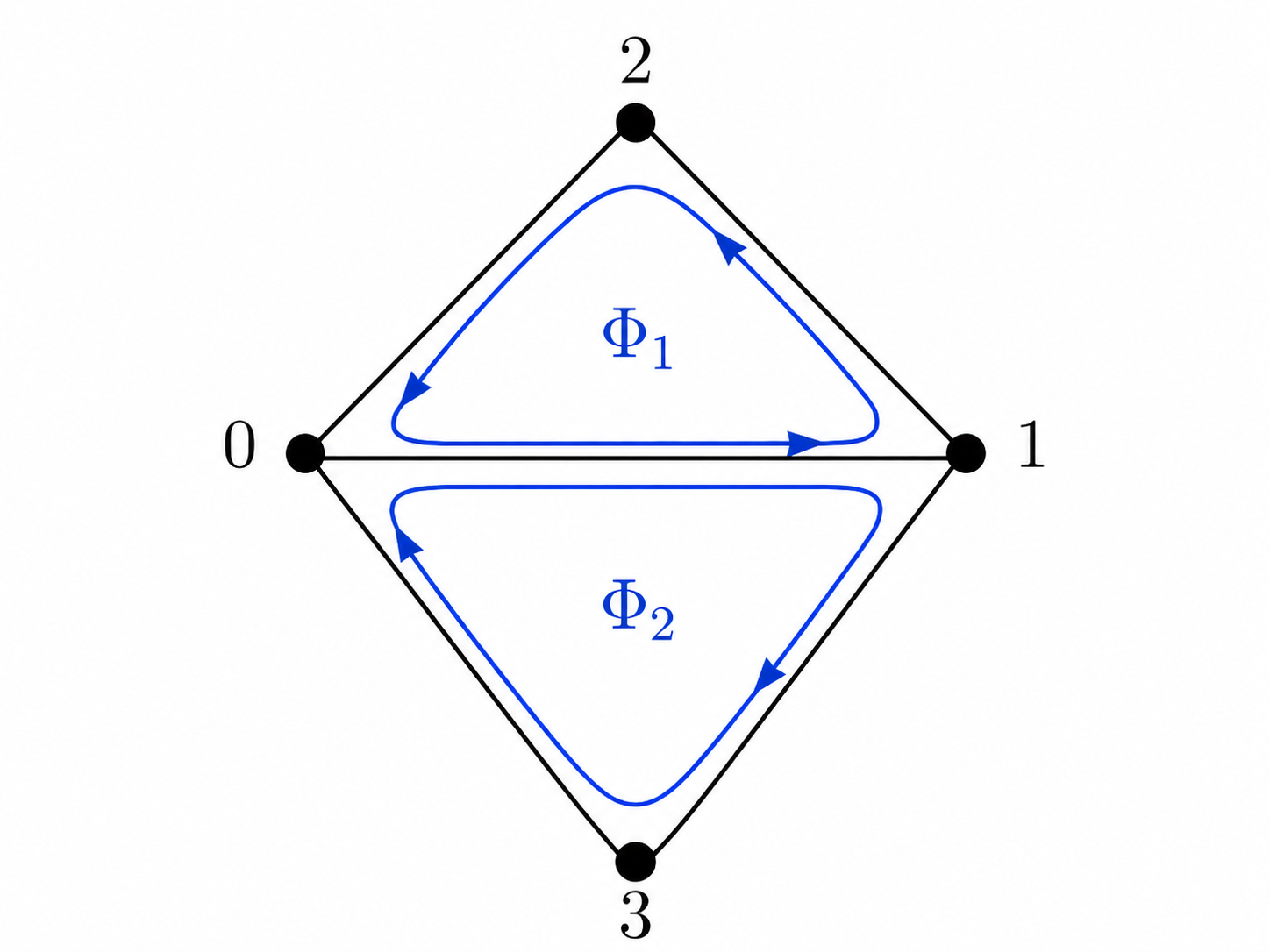}
    \caption{Diamond graph with the oriented fluxes $\Phi_1$ and
    $\Phi_2$ associated with the cycles
    $C_1=(0,1,2,0)$ and $C_2=(0,1,3,0)$, respectively.}
    \label{fig:diamond_fluxes}
\end{figure}
The contrast with an isolated odd cycle is already visible at fourth
order. For triangle $C_3$, the centered fourth
moment is independent of the flux,
$\widetilde{\mu}_4^{(C_3)}=6$,
so the quartic contribution in Eq.~\eqref{eq:flux_contrast_quartic}
vanishes identically and the first generic flux dependence is delayed
to sixth order.

This cancellation does not survive when the triangle is embedded in the
diamond graph. For a return measurement at the shared vertex $\nu=0$,
one finds
\begin{equation}
\gamma_0=3,
\quad
\widetilde{\mu}^{(0)}_4
=
15
+2\cos\Phi_1
+2\cos\Phi_2
+2\cos(\Phi_1-\Phi_2),
\label{eq:diamond_centered_shared}
\end{equation}
while at the inequivalent outer vertex $\nu=2$,
\begin{equation}
\gamma_2=2,
\qquad
\widetilde{\mu}^{(2)}_4
=
10
-4\cos\Phi_1
+2\cos(\Phi_1-\Phi_2).
\label{eq:diamond_centered_outer}
\end{equation}
The corresponding result for vertex $3$ follows from the exchange
$\Phi_1\leftrightarrow\Phi_2$.

Defining
\begin{equation}
F_4(\alpha)
=
1-\alpha^4-(1-\alpha)^4>0,
\qquad
0<\alpha<1,
\label{eq:F4_definition}
\end{equation}
the general expansion in Eq.~\eqref{eq:flux_contrast_quartic} yields,
relative to the zero-flux dynamics
$\Delta_{\boldsymbol{\Phi}}K_\nu
=
K_{\nu,(\Phi_1,\Phi_2)}
-
K_{\nu,(0,0)}$,
\begin{equation}
\begin{split}
\Delta_{\boldsymbol{\Phi}}K_0(\alpha t,t)
={}&
\frac{F_4(\alpha)}{6}
\Big[
(1-\cos\Phi_1)
+(1-\cos\Phi_2)
\\
&+
\bigl(1-\cos(\Phi_1-\Phi_2)\bigr)
\Big]t^4
+O(t^6),
\end{split}
\label{eq:diamond_K_shared}
\end{equation}
and, for the outer root,
\begin{equation}
\begin{split}
\Delta_{\boldsymbol{\Phi}}K_2(\alpha t,t)
={}&
\frac{F_4(\alpha)}{6}
\Big[
\bigl(1-\cos(\Phi_1-\Phi_2)\bigr)
\\
&-2\bigl(1-\cos\Phi_1\bigr)
\Big]t^4
+O(t^6).
\end{split}
\label{eq:diamond_K_outer}
\end{equation}

Thus, although the diamond graph has the same shortest cycle length as the isolated triangle, its generic flux response is quartic rather than sixth order.
The delayed response of an isolated odd ring is therefore not a
consequence of cycle parity or shortest cycle length alone: it originates from
cancellations in the complete rooted spectral structure of the uniform
cycle, which are lifted by the modified set of rooted closed walks in
the multicycle graph. Equations~\eqref{eq:diamond_K_shared} and
\eqref{eq:diamond_K_outer} also show that both the magnitude and the
sign of the leading flux response may depend on the measured vertex.
Notably, the quartic response does not originate from the even
four-cycle $C_{\mathrm{out}}$ alone: for $\Phi_1=\Phi_2$, the
quadrilateral flux vanishes, yet
\begin{equation}
\Delta_{\boldsymbol{\Phi}}K_0(\alpha t,t)
=
\frac{F_4(\alpha)}{3}
\left(1-\cos\Phi_1\right)t^4
+O(t^6)
\label{eq:diamond_equal_flux_limit}
\end{equation}
remains strictly positive for generic $\Phi_1$, while
$\Delta_{\boldsymbol{\Phi}}K_2$ becomes negative. The quartic onset is
therefore driven by the triangle fluxes themselves, whose isolated-ring
cancellation is removed by the nonuniform degree structure.
As established generally in Corollary~\ref{cor:flux_reversal}, the return protocol is
invariant under simultaneous reversal of all fluxes. It need not,
however, be invariant under reversal of only one independent flux.
For the
representative roots $\nu=0,2$,
\begin{equation}
\begin{split}
&K_{\nu,(-\Phi_1,\Phi_2)}(\alpha t,t)
-
K_{\nu,(\Phi_1,\Phi_2)}(\alpha t,t)
\\
&\hspace{1.3cm}
=
\frac{F_4(\alpha)}{3}
\sin\Phi_1\sin\Phi_2\,t^4
+O(t^6).
\end{split}
\label{eq:diamond_partial_reversal}
\end{equation}
The mixed dependence on $\Phi_1-\Phi_2$ therefore allows local return
statistics to probe the relative configuration of the two cycle fluxes.
This does not provide a directional witness of global
time-reversal-symmetry breaking, since configurations related by
simultaneous reversal remain indistinguishable. Rather, it reflects
interference among rooted closed walks that explore different cycles of
the graph.

\section{Flux-controlled Leggett--Garg correlations}
\label{sec:LGI}
We now use the exact correlator in Eq.~\eqref{eq:exact_correlator} to determine how gauge-invariant fluxes modify the Leggett--Garg functional. Since all two-time correlators are fixed by the rooted return amplitude, the flux dependence of $L_3$ is entirely governed by the way closed-path interference reshapes the local return dynamics.

\paragraph{Equally spaced measurements.}
Following a common choice in three-time Leggett--Garg analyses~\cite{Lambert2014,BudroniEmary2014}, we focus on the equally spaced protocol
$(t_1,t_2,t_3)=(0,\tau,2\tau)$, which provides a one-parameter setting for studying the flux dependence of the temporal correlations.
Since the initial state is the $+1$ eigenstate of $Q_\nu$, the measurement
at $t_1=0$ is deterministic and does not disturb the state. The
correlators involving the initial time therefore reduce to
\begin{align}
C_{12}(\boldsymbol{\Phi})
&=
2p_\nu(\tau;\boldsymbol{\Phi})-1,
\label{eq:C12_equally_spaced}
\\
C_{13}(\boldsymbol{\Phi})
&=
2p_\nu(2\tau;\boldsymbol{\Phi})-1.
\label{eq:C13_equally_spaced}
\end{align}
For the pair $(t_2,t_3)=(\tau,2\tau)$,
Eq.~\eqref{eq:exact_correlator} gives
\begin{equation}
\begin{split}
C_{23}(\boldsymbol{\Phi})
={}&
1
-
2p_\nu(\tau;\boldsymbol{\Phi})
\left[
1-p_\nu(\tau;\boldsymbol{\Phi})
\right]
\\
&-
2\left|
A_\nu(2\tau;\boldsymbol{\Phi})
-
A_\nu(\tau;\boldsymbol{\Phi})^2
\right|^2.
\end{split}
\label{eq:C23_equally_spaced}
\end{equation}

Combining Eqs.~\eqref{eq:C12_equally_spaced}--\eqref{eq:C23_equally_spaced},
the Leggett--Garg functional becomes
\begin{equation}
\begin{split}
L_3(\tau;\boldsymbol{\Phi})
={}&
1
+
2p_\nu(\tau;\boldsymbol{\Phi})^2
-
2p_\nu(2\tau;\boldsymbol{\Phi})
\\
&-
2\left|
A_\nu(2\tau;\boldsymbol{\Phi})
-
A_\nu(\tau;\boldsymbol{\Phi})^2
\right|^2.
\end{split}
\label{eq:L3_equally_spaced}
\end{equation}
The first two dynamical terms describe population recurrences at times
$\tau$ and $2\tau$, whereas the last term contains the interference
between uninterrupted return over the interval $2\tau$ and two consecutive
return amplitudes over intervals of duration $\tau$.
\paragraph{Exact relation to measurement back-action.}

To retain the sign of the measurement-induced change in the final
marginal, we define the signed disturbance
\begin{equation}
\delta_{\nu,\boldsymbol{\Phi}}(s,t)
=
p_\nu^{(s)}(t;\boldsymbol{\Phi})
-
p_\nu(t;\boldsymbol{\Phi}),
\qquad
K_{\nu,\boldsymbol{\Phi}}(s,t)
=
\left|
\delta_{\nu,\boldsymbol{\Phi}}(s,t)
\right|.
\label{eq:signed_disturbance}
\end{equation}
For the equally spaced protocol, Eqs.~\eqref{eq:exact_measured_return}
and~\eqref{eq:L3_equally_spaced} give the exact identity
\begin{equation}
\begin{split}
L_3(\tau;\boldsymbol{\Phi})
={}&
1
+
2\delta_{\nu,\boldsymbol{\Phi}}(\tau,2\tau)
\\
&-
4\left|
A_\nu(2\tau;\boldsymbol{\Phi})
-
A_\nu(\tau;\boldsymbol{\Phi})^2
\right|^2 .
\end{split}
\label{eq:L3_backaction_identity}
\end{equation}
Equivalently, the last term is
$4P_{-+}(\tau,2\tau)$. Since $P_{-+}\geq0$, an LGI violation
requires
\begin{equation}
L_3>1
\quad\Longleftrightarrow\quad
\delta_{\nu,\boldsymbol{\Phi}}(\tau,2\tau)
>
2P_{-+}(\tau,2\tau),
\label{eq:L3_violation_condition}
\end{equation}
and therefore implies nonzero marginal back-action,
$K_{\nu,\boldsymbol{\Phi}}(\tau,2\tau)>0$. Moreover,
\begin{equation}
L_3(\tau;\boldsymbol{\Phi})
\leq
1+
2K_{\nu,\boldsymbol{\Phi}}(\tau,2\tau).
\label{eq:L3_K_bound}
\end{equation}
More generally, the same identity holds for arbitrary measurement times $(0,s,t)$, with $0<s<t$, upon replacing $A_\nu(\tau)^2$ by $A_\nu(t-s)A_\nu(s)$. 

The two diagnostics are nevertheless not redundant.
The quantity $K_\nu$ retains neither the sign of the marginal
disturbance nor the branch-resolved probability $P_{-+}$.
Consequently, nonzero measurement back-action is not sufficient for an
LGI violation: the disturbance must be positive and must exceed the
contribution associated with an intermediate outcome $-1$ followed by
return to the measured vertex. Thus, $K_\nu$ quantifies marginal
measurement disturbance, whereas $L_3$ probes how this disturbance is
distributed among the outcome-conditioned temporal branches.
The implication $L_3>1\Rightarrow K_\nu>0$ relies on the localized
initial preparation, for which the outcome at $t_1=0$ is deterministic,
and is not a universal relation between pairwise NSIT and LGI.

At fixed measurement spacing, we quantify the flux-induced modification
relative to the zero-flux dynamics through
\begin{equation}
\Delta_{\boldsymbol{\Phi}}L_3(\tau)
=
L_3(\tau;\boldsymbol{\Phi})
-
L_3(\tau;\boldsymbol{0}),
\label{eq:flux_L3_contrast}
\end{equation}
where $\boldsymbol{0}$ denotes the configuration in which all independent
cycle fluxes vanish.

\paragraph{Short-time behavior.}

Using the centered moments introduced in
Eqs.~\eqref{eq:centered_definitions}--\eqref{eq:chi_definition}, the return
probability and the interference term admit the expansions
\begin{align}
&p_\nu(\tau;\boldsymbol{\Phi})
=
1
-
\gamma_\nu\tau^2
+
\zeta_\nu(\boldsymbol{\Phi})\tau^4
+
O(\tau^6),
\label{eq:return_probability_short_L3}
\\
&\left|
A_\nu(2\tau;\boldsymbol{\Phi})
-
A_\nu(\tau;\boldsymbol{\Phi})^2
\right|^2
=
\gamma_\nu^2\tau^4
+
O(\tau^6).
\label{eq:interference_short_L3}
\end{align}
Substitution into Eq.~\eqref{eq:L3_equally_spaced} yields
\begin{equation}
L_3(\tau;\boldsymbol{\Phi})
=
1
+
4\gamma_\nu\tau^2
-
28\zeta_\nu(\boldsymbol{\Phi})\tau^4
+
O(\tau^6).
\label{eq:L3_short_time}
\end{equation}
Thus, whenever $\gamma_\nu>0$, the Leggett--Garg inequality is violated
for sufficiently small nonzero $\tau$. The leading quadratic contribution
depends only on the total coupling strength at the measured vertex and is
independent of the Peierls phases.

For fixed hopping magnitudes and on-site energies, $\gamma_\nu$ is
independent of the fluxes. The short-time contrast therefore satisfies
\begin{equation}
\begin{split}
\Delta_{\boldsymbol{\Phi}}L_3(\tau)
&=
-28\,
\Delta_{\boldsymbol{\Phi}}\zeta_\nu\,
\tau^4
+
O(\tau^6)
\\
&=
-\frac{7}{3}\,
\Delta_{\boldsymbol{\Phi}}
\widetilde{\mu}_4^{(\nu)}\,
\tau^4
+
O(\tau^6),
\end{split}
\label{eq:L3_flux_short_time}
\end{equation}
where
$\Delta_{\boldsymbol{\Phi}}
\widetilde{\mu}_4^{(\nu)}
=
\widetilde{\mu}_4^{(\nu)}(\boldsymbol{\Phi})
-
\widetilde{\mu}_4^{(\nu)}(\boldsymbol{0})$.
The fourth order is therefore the earliest order at which flux sensitivity
can appear in the equally spaced Leggett--Garg functional. As for the
measurement back-action, however, cancellations among the complete set of
rooted closed-walk contributions may postpone the first nonzero
flux-dependent term.

\subsection{Balanced rooted Krylov dynamics and saturation of the
L\"uders bound}
\label{subsec:rooted_krylov_saturation}

The short-time expansion in Eq.~\eqref{eq:L3_flux_short_time}
characterizes the onset of the flux dependence at a fixed measurement
spacing. The exact expression in Eq.~\eqref{eq:L3_equally_spaced} also
allows one to identify a complementary, graph-independent mechanism for
maximizing the Leggett--Garg violation. This mechanism is most naturally
expressed in terms of the Krylov subspace rooted at the measured vertex.

For a fixed flux configuration $\boldsymbol{\Phi}$, define
\begin{equation}
\mathscr{K}_{\nu}(\boldsymbol{\Phi})
=
\operatorname{span}
\left\{
\ket{\nu},
H_{\boldsymbol{\Phi}}\ket{\nu},
H_{\boldsymbol{\Phi}}^2\ket{\nu},
\ldots
\right\}.
\label{eq:rooted_Krylov_space}
\end{equation}
Let
$H_{\boldsymbol{\Phi}}
=
\sum_r
\lambda_r(\boldsymbol{\Phi})P_r(\boldsymbol{\Phi})$
be the spectral decomposition of the Hamiltonian, and let
$w_r^{(\nu)}(\boldsymbol{\Phi})
=
\bra{\nu}P_r(\boldsymbol{\Phi})\ket{\nu}$
denote the corresponding rooted spectral weights. The rooted spectral
measure is
\begin{equation}
\mu_{\nu,\boldsymbol{\Phi}}
=
\sum_r
w_r^{(\nu)}(\boldsymbol{\Phi})
\delta_{\lambda_r(\boldsymbol{\Phi})}.
\label{eq:rooted_spectral_measure_Krylov}
\end{equation}
For a finite Hermitian Hamiltonian,
$\dim\mathscr{K}_{\nu}(\boldsymbol{\Phi})$ equals the number of distinct
eigenvalues carrying nonzero rooted spectral weight.

\paragraph{Two-dimensional rooted dynamics.}

Suppose that
$\dim\mathscr{K}_{\nu}(\boldsymbol{\Phi})=2$. The rooted spectral measure is then supported on two distinct eigenvalues
$\lambda_-<\lambda_+$ carrying nonzero spectral weight at the measured
vertex,
\begin{equation}
\mu_{\nu,\boldsymbol{\Phi}}
=
w\,\delta_{\lambda_-}
+
(1-w)\,\delta_{\lambda_+},
\qquad
0<w<1,
\label{eq:two_eigenvalue_rooted_measure}
\end{equation}
where the dependence of $w$ and $\lambda_\pm$ on
$\boldsymbol{\Phi}$ is left implicit. Defining
$\Omega=\lambda_+-\lambda_-$, the return amplitude is
\begin{equation}
A_\nu(t;\boldsymbol{\Phi})
=
e^{-i\lambda_-t}
\left[
w+(1-w)e^{-i\Omega t}
\right].
\label{eq:two_eigenvalue_return_amplitude}
\end{equation}
Substitution into Eq.~\eqref{eq:L3_equally_spaced} gives
\begin{equation}
L_3(\tau;\boldsymbol{\Phi})
=
1
+
8w(1-w)
\left[
\cos(\Omega\tau)
-
\cos^2(\Omega\tau)
\right].
\label{eq:two_eigenvalue_L3}
\end{equation}
Since $x-x^2$ is maximized at $x=1/2$, with maximum value $1/4$,
one obtains
\begin{equation}
\max_{\tau\geq0}
L_3(\tau;\boldsymbol{\Phi})
=
1+2w(1-w)
\leq
\frac{3}{2}.
\label{eq:two_eigenvalue_L3_maximum}
\end{equation}
Thus, within the class of two-dimensional rooted Krylov dynamics, the
L\"uders bound is attained if and only if the two rooted spectral weights
are equal, $w=1-w=1/2$. In that case, the first positive saturation time
is
\begin{equation}
\tau_\star
=
\frac{\pi}{3\Omega}.
\label{eq:two_eigenvalue_saturation_time}
\end{equation}

Equation~\eqref{eq:two_eigenvalue_L3_maximum} also shows that a rooted
Krylov dimension equal to two is not sufficient by itself to saturate the
bound. The relative weights with which the two eigenvalues are sampled at
the measured vertex are equally essential.
This mechanism differs from the multilevel enhancement discussed in
Ref.~\cite{BudroniEmary2014}, where a finer projective resolution of the
measurement can raise the quantum bound above the L\"uders value. Here the
measurement remains the fixed dichotomic instrument
$\{\Pi_\nu,\mathbb{I}-\Pi_\nu\}$: gauge flux only reshapes the rooted
dynamics, allowing the bound $L_3=3/2$ to be saturated but not exceeded.

\paragraph{Operator criterion for balanced rooted dynamics.}

Let
$a_\nu=\bra{\nu}H_{\boldsymbol{\Phi}}\ket{\nu}
=\mu_1^{(\nu)}(\boldsymbol{\Phi})$
and recall that
$\gamma_\nu=\widetilde{\mu}_2^{(\nu)}(\boldsymbol{\Phi})$
is the local coupling strength introduced in
Eq.~\eqref{eq:local_coupling_strength}. The dependence of
$a_\nu$ and $\gamma_\nu$ on $\boldsymbol{\Phi}$ is left implicit below.

For $\gamma_\nu>0$, balanced two-dimensional rooted dynamics is
equivalent to
\begin{equation}
\left(
H_{\boldsymbol{\Phi}}-a_\nu\mathbb{I}
\right)^2\ket{\nu}
=
\gamma_\nu\ket{\nu}.
\label{eq:balanced_Krylov_condition}
\end{equation}
Indeed, defining
$\ket{u_1}=
(H_{\boldsymbol{\Phi}}-a_\nu\mathbb{I})
\ket{\nu}/\sqrt{\gamma_\nu}$,
one has
$H_{\boldsymbol{\Phi}}\ket{\nu}
=a_\nu\ket{\nu}+\sqrt{\gamma_\nu}\ket{u_1}$, while
Eq.~\eqref{eq:balanced_Krylov_condition} gives
$H_{\boldsymbol{\Phi}}\ket{u_1}
=\sqrt{\gamma_\nu}\ket{\nu}+a_\nu\ket{u_1}$.
Thus,
$\mathscr{K}_\nu=
\operatorname{span}\{\ket{\nu},\ket{u_1}\}$
is invariant under $H_{\boldsymbol{\Phi}}$, and
\begin{equation}
 H_{\boldsymbol{\Phi}}|_{\mathscr{K}_\nu}
=
\begin{pmatrix}
a_\nu & \sqrt{\gamma_\nu}\\
\sqrt{\gamma_\nu} & a_\nu
\end{pmatrix}.
\label{eq:balanced_Krylov_matrix}
\end{equation}
Its eigenvalues are $a_\nu\pm\sqrt{\gamma_\nu}$, and
$\ket{\nu}$ carries rooted spectral weight $1/2$ on each corresponding
eigenspace. Hence,
\begin{equation}
\mu_{\nu,\boldsymbol{\Phi}}
=
\frac{1}{2}\delta_{a_\nu-\sqrt{\gamma_\nu}}
+
\frac{1}{2}\delta_{a_\nu+\sqrt{\gamma_\nu}}.
\label{eq:balanced_Krylov_measure}
\end{equation}
It follows that
$\max_{\tau\geq0}L_3(\tau;\boldsymbol{\Phi})=3/2$, with first positive
saturation time
$\tau_\star=\pi/(6\sqrt{\gamma_\nu})$.
Conversely, equal rooted weights on two eigenvalues $\lambda_\pm$ imply
$a_\nu=(\lambda_++\lambda_-)/2$ and
$\gamma_\nu=(\lambda_+-\lambda_-)^2/4$, so that
$(H_{\boldsymbol{\Phi}}-a_\nu\mathbb{I})^2
=\gamma_\nu\mathbb{I}$ on the rooted Krylov subspace and
Eq.~\eqref{eq:balanced_Krylov_condition} follows.

Equation~\eqref{eq:balanced_Krylov_condition} also provides a
constructive phase-design criterion. Define the rooted leakage functional
\begin{equation}
\mathcal{R}_\nu(\boldsymbol{\Phi})
=
\left\|
\left(
\mathbb{I}-\ket{\nu}\bra{\nu}
\right)
\left[
H_{\boldsymbol{\Phi}}
-a_\nu(\boldsymbol{\Phi})\mathbb{I}
\right]^2
\ket{\nu}
\right\|^2.
\label{eq:rooted_leakage_functional}
\end{equation}
The conditions
$\mathcal{R}_\nu(\boldsymbol{\Phi})=0$ and
$\gamma_\nu(\boldsymbol{\Phi})>0$
are equivalent to Eq.~\eqref{eq:balanced_Krylov_condition}.
Geometrically, the fluxes enforce destructive interference among the
rooted propagation channels generated by two applications of the
centered Hamiltonian, eliminating all leakage outside a balanced
two-dimensional rooted subspace.

This graph-independent criterion guarantees finite-time saturation of
the L\"uders bound and is necessary and sufficient within the class
$\dim\mathscr{K}_\nu=2$. It is not necessary in full generality, since
Hamiltonians with more than two eigenvalues carrying nonzero rooted
spectral weight may also attain, or approach, the bound through special
relations among their dynamical phases.

\paragraph{Flux-engineered saturation on the diamond graph.}

The diamond graph discussed above provides an exact multicycle
realization of the balanced rooted-Krylov mechanism. For a return
measurement at the outer vertex $\nu=2$ and the relative-flux
configuration $(\Phi_1,\Phi_2)=(0,\pi)$, the Hamiltonian
satisfies
\begin{equation}
\left[
H_{\diamond}(0,\pi)-2\mathbb{I}
\right]^2
\ket{2}
=
2\ket{2}.
\label{eq:diamond_balanced_Krylov}
\end{equation}
Consequently, $\dim\mathscr{K}_2(0,\pi)=2$, and the rooted spectral
measure is
\begin{equation}
\mu_{2,(0,\pi)}
=
\frac{1}{2}\delta_{2-\sqrt{2}}
+
\frac{1}{2}\delta_{2+\sqrt{2}}.
\label{eq:diamond_balanced_measure}
\end{equation}
The remaining eigenmodes have zero rooted weight at vertex $2$ and are
therefore dark with respect to the local return dynamics. Accordingly,
$A_2(t;0,\pi)=e^{-2it}\cos(\sqrt{2}\,t)$, and
Eqs.~\eqref{eq:two_eigenvalue_L3_maximum}
and~\eqref{eq:two_eigenvalue_saturation_time} give
\begin{equation}
\max_{\tau\geq0}
L_{3,2}(\tau;0,\pi)
=
\frac{3}{2},
\qquad
\tau_\star
=
\frac{\pi}{6\sqrt{2}}.
\label{eq:diamond_Luders_saturation}
\end{equation}

At zero flux, the rooted spectral measure at the same vertex is instead
\begin{equation}
\mu_{2,(0,0)}
=
\frac{1}{4}\delta_0
+
\frac{1}{2}\delta_2
+
\frac{1}{4}\delta_4,
\label{eq:diamond_zero_flux_measure}
\end{equation}
corresponding to
$A_2(t;0,0)=e^{-2it}\cos^2 t$. Direct maximization of
Eq.~\eqref{eq:L3_equally_spaced} yields
\begin{equation}
\max_{\tau\geq0}
L_{3,2}(\tau;0,0)
=
\frac{119}{96}
+
\frac{11\sqrt{33}}{288}
\simeq
1.4590.
\label{eq:diamond_zero_flux_L3_maximum}
\end{equation}
The relative flux therefore produces an exact enhancement of
$(75-11\sqrt{33})/288\simeq0.0410$. Thus, it does not merely relocate
the optimal measurement time: it changes the rooted spectral support,
makes two modes dark, and increases the unrestricted maximal violation
to the L\"uders bound.

\paragraph{Maximal strength and temporal accessibility.}

The preceding criterion concerns the unrestricted maximal violation and,
when Eq.~\eqref{eq:balanced_Krylov_condition} is satisfied, guarantees
saturation at the finite time given in
Eq.~\eqref{eq:two_eigenvalue_saturation_time}. In general, however, the
zero-flux and finite-flux dynamics may attain their largest experimentally
accessible violations at substantially different times. To distinguish
the maximal strength of the temporal correlations from their temporal
accessibility, we retain the finite-window optimized quantity
\begin{equation}
L_3^{\star}(\boldsymbol{\Phi};T)
=
\max_{0\leq\tau\leq T}
L_3(\tau;\boldsymbol{\Phi}),
\label{eq:L3_optimized_window}
\end{equation}
where $T$ is the temporal observation window.

The exact correlator identities, the short-time expansion, and the
balanced rooted-Krylov criterion apply to arbitrary finite chiral CTQWs.
The diamond graph provides a concrete multicycle realization of
flux-engineered L\"uders-bound saturation. In the next section, we
specialize the general framework to flux-threaded cycles, for which
the return amplitude can be evaluated exactly and the dependence of
$L_3(\tau;\Phi)$ on the flux, system size, and cycle parity can be studied
explicitly.

\section{Flux-threaded cycles}
\label{sec:cycle}

We now apply the general framework to the cycle graph $C_N$ threaded by a uniform gauge-invariant flux $\phi$. Translation invariance makes the return dynamics independent of the measured vertex, while the single-cycle geometry permits an exact winding-number representation.

\subsection{Spectrum and winding-number expansion}
\label{subsec:cycle_spectrum}

Labeling the vertices by $x=0,\ldots,N-1$, with indices understood modulo $N$, the Hamiltonian is
\begin{equation}
H_\phi
=
2\mathbb{I}
-
\sum_{x=0}^{N-1}
\left(
e^{i\phi/N}\ket{x+1}\!\bra{x}
+
e^{-i\phi/N}\ket{x}\!\bra{x+1}
\right).
\label{eq:cycle_hamiltonian}
\end{equation}
The additive term $2\mathbb{I}$ contributes only a global dynamical phase and therefore does not affect probabilities or sequential correlations.

In the uniform gauge, the eigenvectors are the Fourier modes
\begin{equation}
\ket{k}
=
\frac{1}{\sqrt{N}}
\sum_{x=0}^{N-1}
e^{iq_kx}\ket{x},
\qquad
q_k=\frac{2\pi k}{N},
\label{eq:cycle_fourier_modes}
\end{equation}
with $k=0,\ldots,N-1$, and the corresponding eigenvalues are
\begin{equation}
\lambda_k(\phi)
=
2-2\cos\left(q_k-\frac{\phi}{N}\right).
\label{eq:cycle_eigenvalues}
\end{equation}
The flux thus shifts the dispersion relation relative to the discrete quasimomentum grid.

Since all vertices are equivalent, the return amplitude is independent of the measured vertex:
\begin{equation}
A_N(t;\phi)
=
\frac{1}{N}
\sum_{k=0}^{N-1}
e^{-i\lambda_k(\phi)t},
\label{eq:cycle_return_amplitude}
\end{equation}
with return probability $p_N(t;\phi)=|A_N(t;\phi)|^2$.

Using the Jacobi--Anger expansion, Eq.~\eqref{eq:cycle_return_amplitude} becomes
\begin{align}
A_N(t;\phi)
&=
\frac{e^{-2it}}{N}
\sum_{k=0}^{N-1}
\exp\left[
2it\cos\left(q_k-\frac{\phi}{N}\right)
\right]
\nonumber\\
&=
e^{-2it}
\sum_{\ell\in\mathbb{Z}}
i^{\ell N}J_{\ell N}(2t)e^{-i\ell\phi},
\label{eq:cycle_winding_expansion}
\end{align}
where $J_n$ is the Bessel function of the first kind. Combining opposite winding numbers gives
\begin{equation}
A_N(t;\phi)
=
e^{-2it}
\left[
J_0(2t)
+
2\sum_{\ell=1}^{\infty}
i^{\ell N}J_{\ell N}(2t)\cos(\ell\phi)
\right].
\label{eq:cycle_winding_cosine}
\end{equation}

The integer $\ell$ is the net winding number of a closed path around the cycle. More precisely, the sectors $\ell$ reorganize the rooted closed walks of Appendix~\ref{app:closed-walks} according to their winding number $n_C(w)=\ell$. The even--odd onset derived below is therefore the cycle-specific manifestation of the general closed-walk cancellation mechanism discussed in Sec.~\ref{sec:flux-onset}.

Equation~\eqref{eq:cycle_winding_cosine} also implies
\begin{align}
A_N(t;\phi+2\pi)
&=
A_N(t;\phi),
\label{eq:cycle_flux_periodicity}\\
A_N(t;-\phi)
&=
A_N(t;\phi).
\label{eq:cycle_flux_evenness}
\end{align}
Consequently, all return-based quantities considered here, including $p_N$, $K_{N,\phi}$, and $L_{3,N}$, are even and $2\pi$-periodic functions of the flux. For odd $N$, there is an additional half-flux symmetry. Reindexing the spectrum gives $\lambda_{k+(N+1)/2}(\phi+\pi)=4-\lambda_k(\phi)$,
and hence $A_N(t;\phi+\pi)=e^{-4it}A_N(t;\phi)^*$. Therefore all return-based quantities considered here are $\pi$-periodic for odd cycles. Combined with flux-reversal symmetry, this also implies invariance under $\phi\mapsto\pi-\phi$, and in particular exact equivalence between $\phi=0$ and $\phi=\pi$.

The same expansion provides an exact harmonic decomposition at arbitrary times. Each harmonic $\cos(\ell\phi)$ corresponds to a winding sector $\ell$, while products of return amplitudes generate sum- and difference-frequency harmonics in $p_N$, $L_{3,N}$, and the signed expression entering $K_{N,\phi}$. Moreover, $J_{\ell N}(2t)$ becomes appreciable near its turning region $2t\sim|\ell|N$, defining the characteristic scale
\begin{equation}
t_\ell\sim\frac{|\ell|N}{2}.
\label{eq:winding_time_scale}
\end{equation}
Since the maximal group velocity is $v_{\max}=2$, this is the time required to traverse approximately $|\ell|$ circumferences of the cycle. Flux sensitivity therefore exhibits a crossover rather than a strict temporal threshold. The short-time powers derived below reflect both the suppression of finite-winding sectors and parity-dependent cancellations in return-based observables.

\subsection{Even--odd onset of flux sensitivity}
\label{subsec:cycle_flux_onset}

We set $s=\alpha t$, with $0<\alpha<1$, and denote the cycle specialization of the contrast in Eq.~\eqref{eq:flux_K_contrast} by $\Delta_\phi K_N(\alpha t,t)$. For later convenience, define
\begin{align}
F_N(\alpha)
&=
1-\alpha^N-(1-\alpha)^N,
\label{eq:F_N_alpha}\\
G_N(\alpha)
&=
1-\alpha^{2N}-(1-\alpha)^{2N}.
\label{eq:G_N_alpha}
\end{align}
Both functions are positive for $0<\alpha<1$.

\begin{proposition}[Parity-dependent flux onset]
\label{prop:cycle_parity_onset}
For the flux-threaded cycle $C_N$, with $N\geq3$, the short-time flux contrast of the return-based Kolmogorov inconsistency satisfies the following relations.

For even $N$,
\begin{equation}
\Delta_\phi K_N(\alpha t,t)
=
\frac{4(-1)^{N/2}}{N!}
\left(1-\cos\phi\right)
F_N(\alpha)t^N
+
O(t^{N+2}).
\label{eq:even_cycle_flux_onset}
\end{equation}

For odd $N$,
\begin{align}
\Delta_\phi K_N(\alpha t,t)
={}&
\sin^2\phi
\Bigg[
\left(
\frac{4}{(N!)^2}
-
\frac{8}{(2N)!}
\right)
G_N(\alpha)
\nonumber\\
&
-
\frac{4}{(N!)^2}
F_N(\alpha)^2
\Bigg]
t^{2N}
+
O(t^{2N+2}).
\label{eq:odd_cycle_flux_onset}
\end{align}
\end{proposition}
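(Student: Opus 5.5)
The plan is to work directly with the winding-number representation Eq.~\eqref{eq:cycle_winding_cosine} inside the exact Kolmogorov formula Eq.~\eqref{eq:exact_K}. First I remove the global phase. Writing $A_N(t;\phi)=e^{-2it}a(t;\phi)$, the prefactors cancel in $p_N(s)p_N(t-s)$ and in $A_N(t)^*A_N(t-s)A_N(s)$, since $e^{2it}e^{-2i(t-s)}e^{-2is}=1$. Hence
$$K_{N,\phi}=2\bigl|\,|a_s|^2|a_{t-s}|^2-\operatorname{Re}[a_t^*a_{t-s}a_s]\,\bigr|,$$
with $a_x:=a(x;\phi)$ and
$$a=J_0(2t)+2\sum_{\ell\ge1}i^{\ell N}J_{\ell N}(2t)\cos(\ell\phi).$$
By Eq.~\eqref{eq:K_quartic_expansion}, the bracket starts at $+\gamma\alpha(1-\alpha)t^2>0$ for every $\phi$. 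So for small $t$ the absolute value can be dropped, and $\Delta_\phi K_N$ is twice the flux contrast of the signed bracket. I will then use $J_n(2t)=t^n/n!+O(t^{n+2})$ and $a=1+O(t^2)$, and keep only the lowest-order flux-dependent terms.

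\emph{Even $N$.} Here $i^{\ell N}=\pm1$, so $a$ is real. Moreover $a(\cdot;\phi)-a(\cdot;0)=2(-1)^{N/2}(\cos\phi-1)t^N/N!+O(t^{N+2})$; the winding sectors $\ell\ge2$ enter only at $O(t^{2N})$. I linearize the bracket $a_s^2a_{t-s}^2-a_ta_{t-s}a_s$ around $a\equiv1$. Its partial derivatives there are $+1$ with respect to $a_s$ and $a_{t-s}$, and $-1$ with respect to $a_t$. Any correction from $a-1=O(t^2)$ multiplies the $O(t^N)$ variation and so lands in $O(t^{N+2})$. The signed contrast is therefore
$$2\bigl[\delta a_s+\delta a_{t-s}-\delta a_t\bigr]=\frac{4(-1)^{N/2}}{N!}(\cos\phi-1)\bigl[\alpha^N+(1-\alpha)^N-1\bigr]t^N,$$
which is Eq.~\eqref{eq:even_cycle_flux_onset}.

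\emph{Odd $N$.} Now $i^N=i(-1)^{(N-1)/2}$, so I split $a=r+ib$ into real and imaginary parts. The real part $r$ collects the even windings. Its flux dependence starts with $\ell=2$, where $i^{2N}=-1$, giving $\delta r=-2J_{2N}(2t)(\cos2\phi-1)=4\sin^2\phi\,t^{2N}/(2N)!+\dots$. The imaginary part $b$ collects the odd windings, with $b=2(-1)^{(N-1)/2}\cos\phi\,t^N/N!+O(t^{N+2})$. In the bracket, terms linear in $b$ cancel, because they appear as the real part of $i$ times a real quantity. The surviving contributions to the contrast are then of two kinds:
\begin{itemize}
\item the linear response to $\delta r$, which by the same derivative argument as the even case gives $2[\delta r_s+\delta r_{t-s}-\delta r_t]=-\frac{8}{(2N)!}G_N(\alpha)\sin^2\phi\,t^{2N}$;
\item quadratic terms in $b$ evaluated at $r\approx1$.
\end{itemize}
For the quadratic terms I expand $|a_s|^2|a_{t-s}|^2\ni b_s^2+b_{t-s}^2$ and $\operatorname{Re}[(1-ib_t)(1+ib_{t-s})(1+ib_s)]\ni b_tb_{t-s}+b_tb_s-b_{t-s}b_s$. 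Their flux contrast comes from $\cos^2\phi-1=-\sin^2\phi$ and gives $-\frac{8}{(N!)^2}\sin^2\phi\,P(\alpha)\,t^{2N}$, where
$$P(\alpha)=\alpha^{2N}+(1-\alpha)^{2N}-\alpha^N-(1-\alpha)^N+\alpha^N(1-\alpha)^N.$$
A direct check shows $G_N-F_N^2=-2P$. Adding the two contributions reproduces Eq.~\eqref{eq:odd_cycle_flux_onset}.

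The main obstacle is the odd case bookkeeping. There I must make sure that every neglected cross term is genuinely $O(t^{2N+2})$: the products $(r-1)b^2$, the $O(t^{N+2})$ corrections inside $b$, and the $\ell\ge3$ sectors. I must also confirm that no flux-dependent term of order between $t^N$ and $t^{2N}$ survives. The real/imaginary parity argument handles the latter, since linear-in-$b$ terms vanish identically in the real part. Beyond that, I only need to track the sign conventions of $i^{\ell N}$ carefully.
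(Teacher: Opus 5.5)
Your proposal is correct and follows essentially the same route as the paper: you combine the winding-number form of $A_N$ with small-argument Bessel expansions, substitute into the exact Kolmogorov formula, and let the real or imaginary character of $i^{N}$ decide whether the $\ell=\pm1$ sector contributes linearly (even $N$) or only quadratically together with the $\ell=\pm2$ sector (odd $N$). Your explicit odd-case bookkeeping, including the identity $G_N-F_N^2=-2P$, supplies the algebra the paper leaves implicit and reproduces the stated coefficient.
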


\begin{proof}
The small-argument expansion
\begin{equation}
J_m(2t)=\frac{t^m}{m!}+O(t^{m+2})
\end{equation}
shows that the sectors $\ell=\pm1$ generate the first flux-dependent contribution to the return amplitude. This term is proportional to
$i^NJ_N(2t)\cos\phi$ and is therefore of order $t^N$.

For even $N$, $i^N=(-1)^{N/2}$ is real. The first winding contribution therefore interferes directly with the zero-winding amplitude and produces a flux-dependent term at order $t^N$. Substitution into Eq.~\eqref{eq:exact_K} yields Eq.~\eqref{eq:even_cycle_flux_onset}.

For odd $N$, $i^N$ is purely imaginary. Its linear interference with the real zero-winding contribution vanishes in return probabilities and in the return-based Kolmogorov inconsistency. The leading flux dependence then occurs at order $t^{2N}$ through the squared single-winding contribution and the $\ell=\pm2$ winding sectors. Their combination yields Eq.~\eqref{eq:odd_cycle_flux_onset}. A derivation based on the winding-number expansion is given in Appendix \ref{app:cycle-asymptotics}.
\end{proof}

The winding contribution therefore enters the amplitude at order $t^N$ for every cycle but becomes observable at that order only for even $N$. For even cycles, the sign of the leading correction alternates with $N/2$: at sufficiently short times, a generic nonzero flux enhances the inconsistency for $N=0\pmod 4$ and suppresses it for $N=2\pmod 4$. For odd cycles, the response is delayed to order $t^{2N}$.

\subsection{Finite-time measurement back-action}
\label{subsec:cycle_backaction}

At arbitrary times, the back-action follows by substituting Eq.~\eqref{eq:cycle_return_amplitude} into the exact expression~\eqref{eq:exact_K}. For finite-time comparisons, we use the midpoint protocol
\begin{equation}
K_N^{\mathrm{mid}}(t;\phi)
=
K_{N,\phi}\left(\frac{t}{2},t\right),
\label{eq:cycle_midpoint_K}
\end{equation}
and define its contrast relative to zero flux as
\begin{equation}
\Delta_\phi K_N^{\mathrm{mid}}(t)
=
K_N^{\mathrm{mid}}(t;\phi)
-
K_N^{\mathrm{mid}}(t;0).
\label{eq:cycle_midpoint_contrast}
\end{equation}
The choice $\alpha=1/2$ maximizes the universal quadratic contribution and, for even cycles, the factor $F_N(\alpha)$ controlling the leading flux contrast.

\begin{figure}[t]
\centering
\includegraphics[width=\columnwidth]{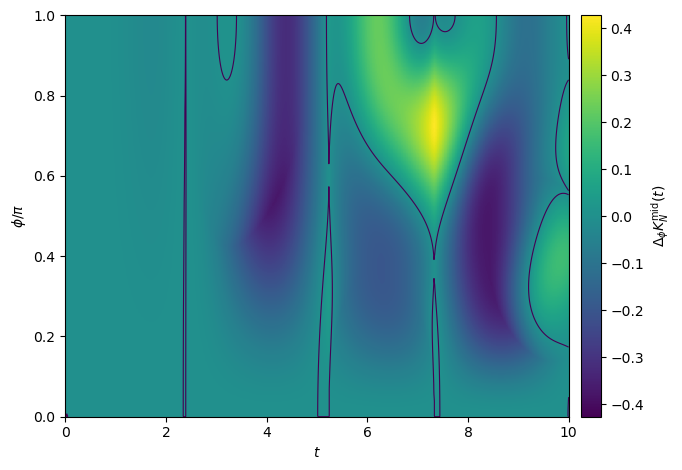}
\caption{
Finite-time flux control of measurement back-action on the cycle $C_6$. The color scale represents the midpoint contrast
$\Delta_{\phi}K_6^{\mathrm{mid}}(t)
=
K_{6,\phi}(t/2,t)-K_{6,0}(t/2,t)$
as a function of $t$ and $\phi/\pi$. Positive and negative values indicate enhancement and suppression relative to zero flux, respectively. The black curves mark
$\Delta_{\phi}K_6^{\mathrm{mid}}(t)=0$.
}
\label{fig:finite-time-backaction}
\end{figure}

For $C_6$, Fig.~\ref{fig:finite-time-backaction} shows the initial suppression predicted by the factor $(-1)^{N/2}$ in Eq.~\eqref{eq:even_cycle_flux_onset}. At longer times, interference among several winding sectors generates alternating regions of enhancement and suppression. The flux therefore reorganizes the temporal structure of the measurement disturbance rather than merely rescaling its magnitude.

\subsection{Flux-controlled Leggett--Garg correlations}
\label{subsec:flux_lg}

For the equally spaced protocol $(0,\tau,2\tau)$, let $L_{3,N}(\tau;\phi)$ denote the functional in Eq.~\eqref{eq:L3_equally_spaced} evaluated with the cycle return amplitude $A_N(t;\phi)$. The return-probability terms encode the flux-dependent recurrence structure, while the interference term compares direct return over $2\tau$ with two consecutive return amplitudes over intervals of duration $\tau$.

Equations~\eqref{eq:cycle_flux_periodicity} and \eqref{eq:cycle_flux_evenness} imply
\begin{align}
L_{3,N}(\tau;\phi+2\pi)
&=
L_{3,N}(\tau;\phi),
\nonumber\\
L_{3,N}(\tau;-\phi)
&=
L_{3,N}(\tau;\phi).
\label{eq:symmetries}
\end{align}
It is therefore sufficient to consider $0\leq\phi\leq\pi$. The points $\phi=0$ and $\phi=\pi$ are invariant under time reversal, whereas generic fluxes break time-reversal symmetry. A Leggett--Garg violation occurs whenever $L_{3,N}(\tau;\phi)>1$.

For an observation window $0\leq\tau\leq T$, define
\begin{equation}
V_{3,N}^{\star}(\phi;T)
=
\left[
L_{3,N}^{\star}(\phi;T)-1
\right]_{+},
\qquad
[x]_{+}=\max\{x,0\},
\label{eq:excess}
\end{equation}
where $L_{3,N}^{\star}$ is the finite-window maximum introduced in Eq.~\eqref{eq:L3_optimized_window}. Since $L_{3,N}(0;\phi)=1$, one has $V_{3,N}^{\star}>0$ if and only if a violation occurs within the chosen window.

\paragraph{Analytic benchmark: $C_4$ at half flux.}

At half flux, the rooted spectral measure of $C_4$ coincides with that of the balanced outer vertex of the diamond graph in Eq.~\eqref{eq:diamond_balanced_measure}. Consequently,
$A_4(t;\pi)
=
e^{-2it}\cos(\sqrt{2}\,t)$,
and
\begin{equation}
\max_{\tau\geq0}L_{3,4}(\tau;\pi)
=
\frac{3}{2},
\qquad
\tau_\star
=
\frac{\pi}{6\sqrt{2}}.
\label{eq:C4_half_flux_saturation}
\end{equation}
At zero flux, Eq.~\eqref{eq:diamond_zero_flux_measure} gives
$A_4(t;0)
=
e^{-2it}\cos^2t$,
and
\begin{equation}
\max_{\tau\geq0}L_{3,4}(\tau;0)
=
\frac{119}{96}
+
\frac{11\sqrt{33}}{288}
\simeq1.4590.
\label{eq:C4_zero_flux_maximum}
\end{equation}
Thus, within the equally spaced protocol, half flux increases the maximal violation to the L\"uders bound rather than merely shifting the optimal measurement time. The correspondence with the diamond graph also shows that the sequential return statistics probe rooted spectral equivalence rather than graph topology alone.

For $C_6$ and $C_{10}$, the optimized violation varies by less than one percent over the flux range in the observation windows considered below. The dominant effect is instead temporal: flux can move a prescribed violation to an earlier recurrence. To quantify this, assume $V_{3,N}^{\star}(0;T)>0$ and fix a fraction $0<\eta\leq1$ of the optimized zero-flux violation excess. The target value is
\begin{equation}
L_{\mathrm{target}}^{(N)}(\eta;T)
=
1+\eta V_{3,N}^{\star}(0;T).
\label{eq:target}
\end{equation}
The earliest time at which this target is reached at flux $\phi$ is
\begin{equation}
\tau_{\mathrm{hit}}^{(N)}(\phi;\eta,T)
=
\inf
\left\{
\tau\in(0,T]:
L_{3,N}(\tau;\phi)
\geq
L_{\mathrm{target}}^{(N)}(\eta;T)
\right\},
\label{eq:hitting_time}
\end{equation}
with the convention $\inf\varnothing=+\infty$. The corresponding temporal speedup is
\begin{equation}
S_N(\eta;T)
=
\frac{
\tau_{\mathrm{hit}}^{(N)}(0;\eta,T)
}{
\displaystyle
\inf_{0\leq\phi\leq\pi}
\tau_{\mathrm{hit}}^{(N)}(\phi;\eta,T)
}.
\label{eq:speedup}
\end{equation}
Because the optimization interval includes $\phi=0$, one has $S_N\geq1$. Values $S_N>1$ indicate earlier access to the same prescribed violation, without necessarily implying a larger optimized value.

\begin{figure}[t]
\centering
\includegraphics[width=\columnwidth]{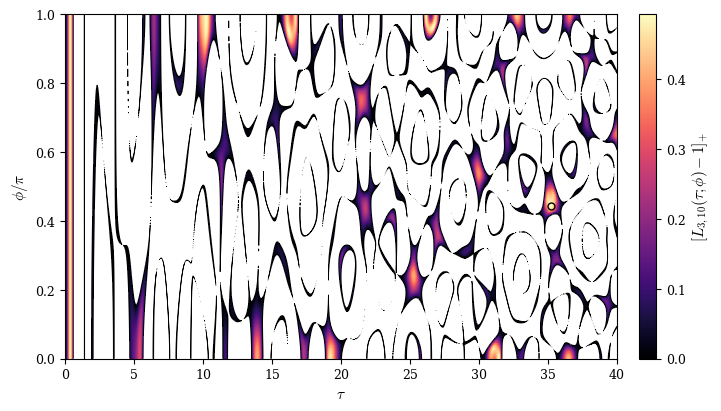}
\caption{
Leggett--Garg violation regions for the cycle $C_{10}$. The color scale represents the positive violation excess
$[L_{3,10}(\tau;\phi)-1]_{+}$
as a function of $\tau$ and $\phi/\pi$. White regions do not violate the macrorealist bound, while the black curves mark
$L_{3,10}(\tau;\phi)=1$. The largest value in the displayed window is
$L_{3,10}\simeq1.4931$, attained near
$\tau\simeq35.2$ and $\phi/\pi\simeq0.445$.
}
\label{fig:L3-phase-diagram-C10}
\end{figure}

Figure~\ref{fig:L3-phase-diagram-C10} shows that the flux changes both the location and temporal extent of the violation regions. The largest value in the displayed window remains below the L\"uders bound and differs only weakly from the zero-flux optimum.

\begin{figure*}[t]
\centering
\includegraphics[width=0.90\textwidth]{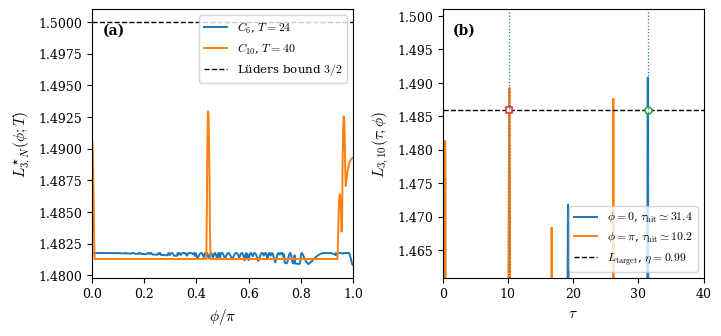}
\caption{
Flux dependence of the Leggett--Garg violation strength and its temporal accessibility for $C_6$ and $C_{10}$.
\textbf{(a)} Optimized functional
$L_{3,N}^{\star}(\phi;T)$
for $C_6$ with $T=24$ and $C_{10}$ with $T=40$. The dashed line marks the L\"uders bound $L_3=3/2$.
\textbf{(b)} Leggett--Garg functional for $C_{10}$ at zero and half flux. The horizontal dashed line marks
$L_{\mathrm{target}}^{(10)}(0.99;40)
=
1+0.99V_{3,10}^{\star}(0;40)$.
The first target crossings occur at
$\tau_{\mathrm{hit}}^{(10)}(0;0.99,40)\simeq31.4$
and
$\tau_{\mathrm{hit}}^{(10)}(\pi;0.99,40)\simeq10.2$,
giving
$S_{10}(0.99;40)\simeq3.1$.
}
\label{fig:L3-strength-and-timing}
\end{figure*}

As shown in Fig.~\ref{fig:L3-strength-and-timing}(a), the optimized violations for $C_6$ and $C_{10}$ depend only weakly on flux. The stronger effect concerns temporal accessibility. For $C_{10}$, $T=40$, and $\eta=0.99$, Fig.~\ref{fig:L3-strength-and-timing}(b) shows that half flux relocates the target from a late zero-flux recurrence at $\tau\simeq31.4$ to an earlier interference peak at $\tau\simeq10.2$, corresponding to
\begin{equation}
S_{10}(0.99;40)\simeq3.1.
\end{equation}

A numerical survey over
$4\leq N\leq16$,
$2N\leq T\leq10N$, and
$0.95\leq\eta\leq0.999$
identifies finite parameter regions with $S_N>1$ for
$N=4,7,8,10,$ and $14$. Speedups of order $2.5$--$3$ occur over multiple sampled thresholds or observation windows for $N=8,10,$ and $14$. For $N\geq6$, the largest advantages typically arise when the target lies above the first nearly flux-insensitive short-time peak, so that the zero-flux trajectory reaches the target only at a later recurrence.

For the representative $C_{10}$ example, the optimal flux is $\phi=\pi$, which is itself time-reversal invariant. The speedup must therefore be attributed to a flux-induced rearrangement of spectral gaps and recurrences rather than to time-reversal-symmetry breaking. Together with the $C_4$ result, this distinguishes flux control of the maximal violation from flux control of its temporal accessibility. Quantifying the possible advantage of earlier violations in the presence of decoherence requires an explicit open-system analysis.

\subsection{Rooted Krylov structure}
\label{subsec:cycle_krylov}

Each Fourier mode has the same local weight,
\begin{equation}
\left|\langle\nu|k\rangle\right|^2
=
\frac{1}{N},
\qquad
\forall\,\nu,k.
\label{eq:cycle_local_spectral_weights}
\end{equation}
An eigenspace of degeneracy $g$ therefore carries total rooted spectral weight $g/N$. The rooted Krylov dimension
\begin{equation}
m_N(\phi)
=
\dim\mathscr{K}_{\nu}(\phi)
\end{equation}
equals the number of distinct eigenvalues in Eq.~\eqref{eq:cycle_eigenvalues} and, by translation invariance, is independent of the root.

At zero flux, the spectral pairing $k\leftrightarrow N-k$ gives
\begin{equation}
m_N(0)
=
\begin{cases}
\dfrac{N+1}{2}, & N\ \text{odd},\\[2mm]
\dfrac{N}{2}+1, & N\ \text{even}.
\end{cases}
\label{eq:krylov_zero_flux}
\end{equation}
At half flux, the pairing becomes $k\leftrightarrow1-k$ modulo $N$, yielding
\begin{equation}
m_N(\pi)
=
\begin{cases}
\dfrac{N+1}{2}, & N\ \text{odd},\\[2mm]
\dfrac{N}{2}, & N\ \text{even}.
\end{cases}
\label{eq:krylov_half_flux}
\end{equation}
For generic flux, the momentum-pair degeneracies are lifted:
\begin{equation}
m_N(\phi)=N,
\qquad
\phi\not\equiv0,\pi\pmod{2\pi}.
\label{eq:krylov_generic_flux}
\end{equation}
Thus, the flux changes the multiplicities, spectral gaps, and number of distinct frequencies contributing to the rooted dynamics, but not the Hilbert-space dimension or the individual Fourier-mode weights.

These counts single out $C_4$ at half flux as the unique simple cycle whose rooted spectral measure is supported on two distinct eigenvalues with equal weight. This explains its saturation of the L\"uders bound through the balanced rooted-Krylov mechanism.

The rooted Krylov dimension alone does not, however, determine either violation strength or temporal accessibility. For $C_{10}$,
\begin{equation}
m_{10}(\pi)=5,
\qquad
m_{10}(0)=6,
\qquad
m_{10}(\phi_{\mathrm{generic}})=10,
\end{equation}
yet the earliest strong violation occurs at half flux. This behavior depends on the complete rooted spectral measure---including multiplicities, weights, gaps, and frequency commensurabilities---rather than on the number of distinct rooted eigenvalues alone.

\section{Conclusions and outlook}
\label{sec:conclusions}

We have shown that single-vertex sequential return measurements provide local probes of gauge-invariant closed-path interference in chiral continuous-time quantum walks. For a walker initially localized at the measured vertex, the complete two-time statistics generated by the dichotomic observable are determined by the rooted return amplitude. The exact relation between the two diagnostics yields $L_3\leq1+2K_{\nu,\Phi}$, implying that any Leggett--Garg violation in the present protocol necessarily entails nonzero measurement back-action. The protocol therefore uses a single-vertex dichotomic readout while retaining information about the rooted spectral structure of the full graph.

At short times, the leading measurement disturbance is independent of the Peierls phases, whereas flux sensitivity enters through gauge-invariant sums of rooted closed walks and may be delayed by interference cancellations. For flux-threaded cycles, the generic leading contrast occurs at order $t^N$ for even $N$ and at order $t^{2N}$ for odd $N$. The diamond graph shows that this delay is not fixed by parity or shortest cycle length alone: in a multicycle network, the response depends on the measured vertex and on relative combinations of the independent fluxes. Although the protocol is invariant under simultaneous reversal of all fluxes, it can therefore distinguish inequivalent relative-flux configurations.

We also derived a graph-independent sufficient criterion for saturating the L\"uders bound. When the rooted dynamics reduces to a balanced two-dimensional Krylov subspace, the maximum over the measurement spacing reaches $L_3=3/2$. Gauge flux can realize this balanced reduction through distinct spectral mechanisms. On the diamond graph, destructive interference renders additional rooted modes dark, whereas for $C_4$ at half flux, flux-induced degeneracies collapse the rooted spectral
support to two equally weighted eigenvalues. For the larger cycles examined, flux has only a weak effect on the optimized violation but can substantially advance the time at which a strong violation is reached; for the representative $C_{10}$ case, the resulting temporal speedup is approximately $3.1$. Gauge flux can thus control either the maximal strength or the temporal accessibility of quantum temporal correlations.

Natural extensions include studying the robustness of these effects under decoherence, which can constrain the maximal observable Leggett--Garg
violation \cite{Emary2013}. In particular, it would be interesting to determine whether flux-induced saturation or earlier accessibility enlarges the regime in which Leggett--Garg violations remain observable. A complementary direction is the metrological characterization of flux inference from local return statistics. Recent work has connected Leggett--Garg violations to quantum Fisher information in stationary settings~\cite{Abboud2026}. Whether analogous connections hold for the nonstationary localized preparation considered here and for flux estimation remains an open question. Comparing the quantum Fisher information with the classical Fisher information of the dichotomic sequential protocol could quantify its efficiency and, in multicycle networks, identify which individual or relative flux combinations are locally estimable. This may enable sparse probes of synthetic gauge fields without complete position-resolved detection.

\appendix


\section{Derivation of the exact sequential return statistics}
\label{app:exact-statistics}

We derive the exact two-time statistics of the return observable
$Q_\nu=2\Pi_\nu-\mathbb{I}$ for a walker initially localized at the
measured vertex, $\rho_0=\Pi_\nu$, where
$\Pi_\nu=\ket{\nu}\!\bra{\nu}$ and
$\Pi_\nu^\perp=\mathbb{I}-\Pi_\nu$. We denote the propagator by
$U_t=e^{-iHt}$, the return amplitude at the measured vertex by $A_\nu(t)=\bra{\nu}U_t\ket{\nu}$, and the corresponding return probability by
$p_\nu(t)=|A_\nu(t)|^2$. The dependence on the gauge-invariant fluxes is
suppressed throughout this appendix.

Consider two measurements at times $t_1<t_2$, and let
$\tau=t_2-t_1$. The probabilities for the sequences $(+,+)$ and
$(+,-)$ are
\begin{align}
P_{++}(t_1,t_2)
&=
\left\|
\Pi_\nu U_\tau\Pi_\nu U_{t_1}\ket{\nu}
\right\|^2
=
p_\nu(t_1)p_\nu(\tau),
\\
P_{+-}(t_1,t_2)
&=
\left\|
\Pi_\nu^\perp U_\tau\Pi_\nu U_{t_1}\ket{\nu}
\right\|^2
=
p_\nu(t_1)\left[1-p_\nu(\tau)\right].
\end{align}

For the sequence $(-,+)$, the relevant transition amplitude is
\begin{align}
\bra{\nu}U_\tau\Pi_\nu^\perp U_{t_1}\ket{\nu}
&=
\bra{\nu}U_\tau U_{t_1}\ket{\nu}
-
\bra{\nu}U_\tau\Pi_\nu U_{t_1}\ket{\nu}
\nonumber
\\
&=
A_\nu(t_2)-A_\nu(\tau)A_\nu(t_1).
\end{align}
It follows that
\begin{equation}
P_{-+}(t_1,t_2)
=
\left|
A_\nu(t_2)-A_\nu(\tau)A_\nu(t_1)
\right|^2.
\end{equation}
Finally, using
$P_{-+}+P_{--}=1-p_\nu(t_1)$ gives
\begin{equation}
P_{--}(t_1,t_2)
=
1-p_\nu(t_1)
-
\left|
A_\nu(t_2)-A_\nu(\tau)A_\nu(t_1)
\right|^2.
\end{equation}

For a nonselective intermediate measurement at time $s<t$, the
probability of finding the walker at $\nu$ at the final time is obtained
by summing over the intermediate outcome:
\begin{align}
p_\nu^{(s)}(t)
&=
P_{++}(s,t)+P_{-+}(s,t)
\nonumber
\\
&=
p_\nu(s)p_\nu(t-s)
+
\left|
A_\nu(t)-A_\nu(t-s)A_\nu(s)
\right|^2.
\label{eq:app-exact-measured-return}
\end{align}
Expanding the modulus squared yields
\begin{align}
p_\nu^{(s)}(t)-p_\nu(t)
=
2\Big\{
&p_\nu(s)p_\nu(t-s)
\nonumber
\\
&-
\operatorname{Re}\!\left[
A_\nu(t)^*A_\nu(t-s)A_\nu(s)
\right]
\Big\}.
\end{align}
The Kolmogorov inconsistency therefore takes the exact form
\begin{equation}
K_\nu(s,t)
=
2\left|
p_\nu(s)p_\nu(t-s)
-
\operatorname{Re}\!\left[
A_\nu(t)^*A_\nu(t-s)A_\nu(s)
\right]
\right|.
\label{eq:app-exact-K}
\end{equation}

The corresponding two-time correlator is
\begin{align}
C_{12}
&=
P_{++}+P_{--}-P_{+-}-P_{-+}
\nonumber
\\
&=
1
-
2p_\nu(t_1)
\left[
1-p_\nu(t_2-t_1)
\right]
\nonumber
\\
&\quad
-
2\left|
A_\nu(t_2)
-
A_\nu(t_2-t_1)A_\nu(t_1)
\right|^2.
\end{align}

\section{Derivation of the even short-time expansion}
\label{app:short-time}

This appendix provides the derivation of
Eqs.~\eqref{eq:even_K_expansion} and
\eqref{eq:K_quartic_expansion}. 
We write
$s=\alpha t$ and $r=(1-\alpha)t$.

The shift introduced in Eq.~\eqref{eq:centered_definitions} produces only
a global phase:
\begin{equation}
A_\nu(u)
=
e^{-i\mu_1^{(\nu)}u}
\widetilde A_\nu(u),
\qquad
\widetilde A_\nu(u)
=
\bra{\nu}
e^{-i\widetilde H_\nu u}
\ket{\nu}.
\label{eq:app-centered-amplitude}
\end{equation}
Since $s+r=t$, this global phase cancels from both return probabilities
and from the modulus
$\left|
A_\nu(t)-A_\nu(r)A_\nu(s)
\right|$.
We may therefore perform the expansion using
$\widetilde H_\nu$.

The centered return amplitude is
\begin{align}
\widetilde A_\nu(u)
={}&
1
-\frac{\gamma_\nu}{2}u^2
+\frac{i\widetilde\mu_3^{(\nu)}}{6}u^3
+\frac{\widetilde\mu_4^{(\nu)}}{24}u^4
+
O(u^5).
\label{eq:app-centered-amplitude-expansion}
\end{align}
Using
$\widetilde A_\nu(-u)=\widetilde A_\nu(u)^*$, the return probability
contains only even powers:
\begin{equation}
p_\nu(u)
=
\left|\widetilde A_\nu(u)\right|^2
=
1-\gamma_\nu u^2
+\left[
\frac{\widetilde\mu_4^{(\nu)}}{12}
+
\frac{\gamma_\nu^2}{4}
\right]u^4
+
O(u^6).
\label{eq:app-return-probability-expansion}
\end{equation}
The coefficient in square brackets is precisely $\zeta_\nu$ defined in
Eq.~\eqref{eq:chi_definition}.

The interference amplitude appearing in
Eq.~\eqref{eq:exact_measured_return} satisfies
\begin{equation}
\widetilde A_\nu(t)
-
\widetilde A_\nu(r)\widetilde A_\nu(s)
=
-\gamma_\nu sr+
\frac{i\widetilde\mu_3^{(\nu)}}{2}
sr(s+r)
+
O(t^4).
\label{eq:app-interference-amplitude}
\end{equation}
The leading term is real, whereas the cubic term is purely imaginary.
Their cross term therefore vanishes in the modulus squared, giving
\begin{equation}
\left|
\widetilde A_\nu(t)
-
\widetilde A_\nu(r)\widetilde A_\nu(s)
\right|^2
=
\gamma_\nu^2s^2r^2
+
O(t^6).
\label{eq:app-interference-probability}
\end{equation}

Substitution of
Eqs.~\eqref{eq:app-return-probability-expansion} and
\eqref{eq:app-interference-probability} into
Eq.~\eqref{eq:exact_measured_return} yields
\begin{align}
p_\nu^{(s)}(t)-p_\nu(t)
={}&
2\gamma_\nu sr
+
\zeta_\nu
\left(
s^4+r^4-t^4
\right)
+
2\gamma_\nu^2s^2r^2
+
O(t^6).
\label{eq:app-return-difference}
\end{align}
Setting $s=\alpha t$ and $r=(1-\alpha)t$ gives
Eq.~\eqref{eq:K_quartic_expansion}.

To establish the even structure to all orders, define
$\delta_\nu(\alpha t,t)
=
p_\nu^{(\alpha t)}(t)-p_\nu(t)$.
Hermiticity implies
$A_\nu(-u)=A_\nu(u)^*$,
and the exact identity
Eq.~\eqref{eq:exact_measured_return} therefore gives
\begin{equation}
\delta_\nu(-\alpha t,-t)
=
\delta_\nu(\alpha t,t).
\label{eq:app-delta-even}
\end{equation}
Thus $\delta_\nu$ is an even analytic function of $t$. Its leading
coefficient is
$2\gamma_\nu\alpha(1-\alpha)>0$
for $\gamma_\nu>0$ and $0<\alpha<1$. Hence, for sufficiently small
positive $t$, $K_{\nu,\boldsymbol{\Phi}}(\alpha t,t)
=
\delta_\nu(\alpha t,t)$,
which proves Eq.~\eqref{eq:even_K_expansion}.

Finally, expanding the centered fourth moment gives
\begin{equation}
\widetilde\mu_4^{(\nu)}
={}
\mu_4^{(\nu)}
-
4\mu_1^{(\nu)}\mu_3^{(\nu)}
+
6\left(\mu_1^{(\nu)}\right)^2\mu_2^{(\nu)}
-
3\left(\mu_1^{(\nu)}\right)^4.
\end{equation}

\section{Rooted spectral moments and gauge-invariant closed walks}
\label{app:closed-walks}

We provide here the graph-theoretic interpretation of the rooted spectral
moments introduced in Eq.~\eqref{eq:rooted_moments}. Expanding the matrix
product gives
\begin{equation}
\begin{split}
\mu_n^{(\nu)}
&=
\left(H_\chi^n\right)_{\nu\nu}
=\\&\sum_{x_1,\ldots,x_{n-1}}
(H_\chi)_{\nu x_{n-1}}
(H_\chi)_{x_{n-1}x_{n-2}}
\cdots
(H_\chi)_{x_1\nu}.
\label{eq:app-closed-walk-sum}
\end{split}
\end{equation}
Each term is associated with a length-$n$ closed walk
$w=(\nu,x_1,\ldots,x_{n-1},\nu)$ rooted at the measured vertex $\nu$. Its weight is
\begin{equation}
\mathcal{W}_w
=
(H_\chi)_{\nu x_{n-1}}
(H_\chi)_{x_{n-1}x_{n-2}}
\cdots
(H_\chi)_{x_1\nu}.
\label{eq:app-walk-weight}
\end{equation}
Off-diagonal matrix elements describe hopping along graph edges, whereas
diagonal matrix elements may be viewed as stationary steps.

Under the local gauge transformation
$H_\chi
\longmapsto
\Lambda_{\boldsymbol{\beta}}
H_\chi
\Lambda_{\boldsymbol{\beta}}^\dagger$,
the matrix elements transform as
$(H_\chi)_{xy}
\longmapsto
e^{i(\beta_x-\beta_y)}
(H_\chi)_{xy}$.
The gauge factor accumulated by the product
Eq.~\eqref{eq:app-walk-weight} is
\begin{align}
e^{i(\beta_\nu-\beta_{x_{n-1}})}
e^{i(\beta_{x_{n-1}}-\beta_{x_{n-2}})}
\cdots
e^{i(\beta_{x_1}-\beta_\nu)}
=1.
\label{eq:app-gauge-cancellation}
\end{align}
Thus, every individual rooted closed-walk weight is gauge invariant, and
so are the rooted moments obtained by summing these contributions.

For a graph with a chosen basis of independent cycles, the flux-dependent
part of the phase accumulated by a closed walk can be written as
\begin{equation}
\varphi_w^{(\mathrm{flux})}
=
\sum_C n_C(w)\Phi_C
\pmod{2\pi},
\label{eq:app-winding-phase}
\end{equation}
where $n_C(w)\in\mathbb{Z}$ is the oriented winding number of the walk
around the cycle $C$. A segment traversed along an edge and immediately
retraced in the opposite direction accumulates zero net Peierls phase.
Consequently, closed walks consisting only of backtracking and stationary
steps are flux independent, whereas walks with nonzero winding can carry a
nontrivial dependence on the gauge-invariant cycle fluxes.

The centered moments
$\widetilde\mu_n^{(\nu)}
=
\bra{\nu}
\left(
H_\chi-\mu_1^{(\nu)}\mathbb{I}
\right)^n
\ket{\nu}$
are polynomial combinations of the uncentered rooted moments and therefore
inherit their gauge invariance. The occurrence of a flux-dependent walk in
a given moment does not, however, guarantee that the corresponding
observable coefficient is flux dependent. Quantities such as
$K_{\nu,\boldsymbol{\Phi}}$ combine several moments and sums over many
closed walks. Contributions with different orientations or geometries may
cancel because of parity, graph automorphisms, or destructive
interference.

The length of the shortest cycle accessible from $\nu$ therefore provides
only a possible geometric scale for the onset of flux sensitivity. The
actual observable onset is determined by the lowest even order at which
the complete gauge-invariant combination of rooted closed-walk
contributions does not vanish.

Finally, if the graph is a tree, every assignment of edge phases can be
removed by a vertex-dependent gauge transformation. Equivalently, every
closed walk on a tree decomposes into backtracking segments and carries no
physical cycle flux. Hence no transition probability, return statistic,
sequential measurement quantity, or temporal correlator can depend
physically on the Peierls phases.


\section{Winding-number representation and parity-dependent flux onset}
\label{app:cycle-asymptotics}

\paragraph{Winding-number representation.}

Starting from the spectral return amplitude in
Eq.~\eqref{eq:cycle_return_amplitude}, with the eigenvalues
$\lambda_k(\phi)$ given in Eq.~\eqref{eq:cycle_eigenvalues}, we extract
the global phase $e^{-2it}$ and use the Jacobi--Anger identity
\begin{equation}
e^{iz\cos\vartheta}
=
\sum_{m\in\mathbb{Z}}
i^m J_m(z)e^{im\vartheta}
\end{equation}
to obtain
\begin{equation}
A_N(t;\phi)
=
\frac{e^{-2it}}{N}
\sum_{m\in\mathbb{Z}}
i^m J_m(2t)e^{-im\phi/N}
\sum_{k=0}^{N-1}
e^{2\pi i mk/N}.
\end{equation}
The discrete Fourier sum satisfies
\begin{equation}
\frac{1}{N}
\sum_{k=0}^{N-1}
e^{2\pi i mk/N}
=
\begin{cases}
1, & m=\ell N,\quad \ell\in\mathbb{Z},\\
0, & \text{otherwise},
\end{cases}
\end{equation}
and therefore selects the winding sectors $m=\ell N$, yielding
Eq.~\eqref{eq:cycle_winding_expansion}. Combining the sectors $\ell$
and $-\ell$ using
$J_{-n}(z)=(-1)^nJ_n(z)$ and
$i^{-\ell N}(-1)^{\ell N}=i^{\ell N}$ gives
Eq.~\eqref{eq:cycle_winding_cosine}. The flux periodicity and
flux-reversal symmetry in
Eqs.~\eqref{eq:cycle_flux_periodicity}--\eqref{eq:cycle_flux_evenness}
then follow immediately.

\paragraph{Short-time expansion.}

We now derive the leading flux-dependent contribution to the
Kolmogorov inconsistency on the cycle $C_N$. We use
the functions $F_N(\alpha)$ and $G_N(\alpha)$ defined in
Eqs.~\eqref{eq:F_N_alpha} and \eqref{eq:G_N_alpha}, respectively. Both
are positive for $0<\alpha<1$.

For $m\in\mathbb{N}_0$, the small-argument expansion of the Bessel
function of the first kind is
\begin{equation}
J_m(2t)
=
\frac{t^m}{m!}
+
O(t^{m+2}).
\label{eq:app-bessel-small}
\end{equation}
The first flux-dependent winding sectors are $\ell=\pm1$ and enter the
return amplitude at order $t^N$. Whether this contribution survives
linearly in return-based quantities depends on the parity of $N$.

\paragraph{Even cycles.}

For even $N$, the factor $i^N=(-1)^{N/2}$ is real. After removing the
common global phase, the difference between the finite-flux and
zero-flux winding contributions is
\begin{equation}
A_N(t;\phi)-A_N(t;0)
=
2(-1)^{N/2}
\frac{t^N}{N!}
\left(\cos\phi-1\right)
+
O(t^{N+2}).
\label{eq:app-even-amplitude-contrast}
\end{equation}
This correction interferes linearly with the real zero-winding sector.
Using the decomposition in
Eq.~\eqref{eq:app-exact-measured-return}, with $s=\alpha t$ and
$r=(1-\alpha)t$, the amplitude $A_N(t)-A_N(r)A_N(s)$ is $O(t^2)$ at zero
flux, while its flux-dependent correction is $O(t^N)$. Hence its
contribution to the flux contrast starts at $O(t^{N+2})$, and the leading
$O(t^N)$ term comes from the return-probability terms
$p_N(s)p_N(t-s)-p_N(t)$, yielding
\begin{equation}
\Delta_\phi K_N(\alpha t,t)
=
\frac{4(-1)^{N/2}}{N!}
\left(1-\cos\phi\right)
F_N(\alpha)t^N
+
O(t^{N+2}).
\label{eq:app-even-onset}
\end{equation}
Thus, the first flux-dependent contribution to the Kolmogorov
inconsistency is of order $t^N$ for even cycles.

\paragraph{Odd cycles.}

For odd $N$, the factor $i^N$ is purely imaginary. The single-winding
contribution therefore does not interfere linearly with the real
zero-winding amplitude in the return probabilities. The order-$t^N$
flux correction cancels, and the first nonzero contribution appears at
order $t^{2N}$.

Up to this order, and again omitting the common global phase, the return
amplitude can be written as
\begin{align}
A_N(t;\phi)
={}&
J_0(2t)
+
2i\eta_N J_N(2t)\cos\phi
\nonumber
\\
&-
\frac{2t^{2N}}{(2N)!}\cos(2\phi)
+
O(t^{2N+2}),
\label{eq:app-odd-A}
\end{align}
where
$\eta_N=(-1)^{(N-1)/2}$.
Keeping the zero-winding and single-winding Bessel terms unexpanded at
this stage prevents the loss of subleading contributions below order
$t^{2N+2}$.

Using Eq.~\eqref{eq:app-bessel-small} only after forming the return
probabilities and the interference term, substitution into
Eq.~\eqref{eq:app-exact-K} gives
\begin{equation}
\begin{split}
\Delta_\phi K_N(\alpha t,t)
=
\sin^2\phi
\Bigg[
&
\left(
\frac{4}{(N!)^2}
-
\frac{8}{(2N)!}
\right)
G_N(\alpha)
\\
&-
\frac{4}{(N!)^2}
F_N(\alpha)^2
\Bigg]
t^{2N}
+
O(t^{2N+2}).
\end{split}
\label{eq:app-odd-onset}
\end{equation}
Hence, for odd cycles, the flux response is delayed from order $t^N$ in
the return amplitude to order $t^{2N}$ in the return-based Kolmogorov
inconsistency.

\section{Numerical methods and figure generation} \label{app:numerics} 
All numerical results were obtained from the exact finite-size spectral
return amplitude in Eq.~\eqref{eq:cycle_return_amplitude}, with
\(p_N(t;\phi)=|A_N(t;\phi)|^2\).  
All plotted quantities were evaluated from the full finite-size spectral
sum, without using truncated short-time or winding-number expansions.
\paragraph{Finite-time measurement back-action.} \label{app:numerics_fig2} 
Figure~\ref{fig:finite-time-backaction} was generated from the midpoint
contrast defined in Eq.~\eqref{eq:cycle_midpoint_contrast}.
For \(N=6\), we sampled \(0\leq t\leq10\) and
\(0\leq\phi\leq\pi\) on a uniform \(700\times500\)
time--flux grid. The color scale was centered at zero using a symmetric normalization over $ \left[ -\max|\Delta_\phi K_6^{\mathrm{mid}}|, \max|\Delta_\phi K_6^{\mathrm{mid}}| \right]. $ Positive and negative values therefore represent, respectively, flux-induced enhancement and suppression relative to the zero-flux dynamics. The black curves are the numerically determined contours $\Delta_\phi K_6^{\mathrm{mid}}(t)=0$.
\paragraph{Leggett--Garg phase diagrams and optimization.} \label{app:numerics_figs34} The results in Figs.~\ref{fig:L3-phase-diagram-C10} and \ref{fig:L3-strength-and-timing} were obtained by substituting the exact return amplitude in Eq.~\eqref{eq:cycle_return_amplitude} into the Leggett--Garg functional in Eq.~\eqref{eq:L3_equally_spaced}. For Fig.~\ref{fig:L3-phase-diagram-C10}, we evaluated
\(L_{3,10}(\tau;\phi)\) on a uniform \(501\times4001\) grid
over \(0\leq\phi\leq\pi\) and \(0\leq\tau\leq40\). The color map displays \([L_{3,10}-1]_+\), the black curves are the numerical contours \(L_{3,10}=1\), and the open marker denotes the largest value found on the displayed grid. For Fig.~\ref{fig:L3-strength-and-timing}(a), the flux interval was sampled at \(501\) uniformly spaced points. At each flux, the temporal maximization was initialized from \(20001\) uniformly spaced values over \(0\leq\tau\leq T\), with \(T=24\) for \(C_6\) and \(T=40\) for \(C_{10}\). Local maxima were first identified on the discrete curve. The twelve highest candidate peaks were then refined by bounded one-dimensional maximization over their corresponding local intervals. The largest refined value was retained as \(L_{3,N}^{\star}(\phi;T)\), with an absolute optimization tolerance of \(10^{-11}\). For Fig.~\ref{fig:L3-strength-and-timing}(b), the curves at \(\phi=0\) and \(\phi=\pi\) for \(C_{10}\) were sampled at \(40001\) uniformly spaced times over $ 0\leq\tau\leq40$. The target value was  $L_{\mathrm{target}} = 1+0.99 \left[ L_{3,10}^{\star}(0;40)-1 \right]$. The earliest threshold crossing was first bracketed on the discrete time grid and then refined using Brent's root-finding method, with absolute and relative tolerances \(10^{-11}\) and \(10^{-12}\), respectively. The refined crossing times were used to evaluate the speedup ratio in Eq.~\eqref{eq:speedup}.

\bibliography{bibliography}

\end{document}